\newtheorem{theorem}{Theorem}
\newtheorem{lemma}{Lemma}
\newtheorem{corollary}{Corollary}
\newcommand\numberthis{\addtocounter{equation}{1}\tag{\theequation}}
\newenvironment{customthm}[1]
  {\innercustomthm}
  {\endinnercustomthm}
\newcommand{\setlabel}[1]{\edef\@currentlabel{#1}\label}
\title{Alternating Layered Variational Quantum Circuits Can Be Classically Optimized Efficiently Using Classical Shadows 

}
\title{Ansatz-Agnostic Exponential Resource Saving in Variational Quantum Algorithms Using Shallow Shadows}
\author {
    Afrad Basheer\footnote{Email: Afrad.M.Basheer@student.uts.edu.au},\textsuperscript{\rm 1}
    Yuan Feng, \textsuperscript{\rm 1}
    Christopher Ferrie, \textsuperscript{\rm 1}
    Sanjiang Li \textsuperscript{\rm 1}
}
\begin{document}

\maketitle

\begin{abstract}

Variational Quantum Algorithms (VQA) have been identified as a promising candidate for the demonstration of near-term quantum advantage in solving optimization tasks in chemical simulation, quantum information, and machine learning. The standard model of training requires a significant amount of quantum resources, which led us to use classical shadows to devise an alternative that consumes exponentially fewer quantum resources. However, the approach only works when the observables are local and the ansatz is the shallow Alternating Layered Ansatz (ALA), thus severely limiting its potential in solving problems such as quantum state preparation, where the ideal state might not be approximable with an ALA. In this work, we present a protocol based on shallow shadows that achieves similar levels of savings for almost any shallow ansatz studied in the literature, when combined with observables of low Frobenius norm. We show that two important applications in quantum information for which VQAs can be a powerful option, namely variational quantum state preparation and variational quantum circuit synthesis, are compatible with our protocol. We also experimentally demonstrate orders of magnitude improvement in comparison to the standard VQA model.

\end{abstract}

\section{Introduction}
    The fields of quantum computing and quantum algorithms have made huge strides in the past decade. Although we are currently in the era of small erroneous quantum devices called Noisy Intermediate Scale Quantum (NISQ)~\cite{Preskill2018} devices, different research groups were still able to pave the way for demonstrating quantum advantage over classical computers in synthetic but well-defined sampling problems~\cite{Arute2019,Zhong2020,Madsen2022}. The next major breakthrough in this area will be to replicate similar advantages for practically valuable problems. 
    
    Many proposals have been put forward and one class of algorithms that stands out is Variational Quantum Algorithms (VQA)~\cite{McClean2016}. These algorithms are specifically designed to solve optimization problems involving quantum information, which are stored as quantum states using quantum bits a.k.a. qubits and operated using quantum circuits. The core idea is built upon the fact that many important functions involving these objects are notoriously hard or intractable to evaluate on classical computers because this will require classical computational resources exponential in the number of qubits involved. By using parameterized quantum circuits, such functions can be estimated with polynomially many quantum resources on quantum devices, thereby enabling optimization using iterative optimization algorithms. Potentially useful applications include Variational Quantum Eigensolver~\cite{Peruzzo2014}, Quantum Support Vector Machines~\cite{Havlicek2019}, Quantum Approximate Optimization Algorithm~\cite{Farhi2014}, etc.

    Unlike classical computing, the lack of quantum memory devices coupled with the no-cloning theorem implies that each use of a quantum state requires preparing it from scratch. When discussing VQAs, we use the term \textit{sample complexity} to denote the total number of executions of the quantum device required (equivalently the total number of copies of quantum states consumed). In the standard VQA model, this scales linearly with the total number of function evaluations required throughout the optimization. Bring hyperparameter tuning, choice of models and ansatzes, etc. into the picture and suddenly this number is very large.  Moreover, in the near term, only very few capable quantum computers would be available and hence implementing such VQAs with a reduced sample complexity is crucial.

    Interesting parallels can be drawn between VQA training and quantum tomography when viewed in the Heisenberg picture. Since classical shadow tomography~\cite{Huang2020} provides an exponentially better method to estimate linear functionals involving quantum states, this was adopted in the VQA training protocols to achieve an exponential reduction in quantum resources in our previous work~\cite{Basheer2023}. But the method,
    titled \textit{Alternating Layered Shadow Optimization} (ALSO), uses a version of shadow tomography that requires target observables to be local, and this restricts the ansatzes to require simple entanglement structures such as the Alternating Layered Ansatz (ALA) given in Figure~\ref{fig:circuits}(a). This limitation is profound when the optimal circuit or state is not approximable with ALAs. 
    
    The recently proposed shallow shadow technique~\cite{Bertoni2023} describes a similar tomography procedure that can be easily implemented in NISQ devices and does not rely directly on the locality of the observables. By leveraging this, in this work, we introduce \textit{Ansatz Independent Shadow Optimization} (AISO), a method that provides an exponential reduction in quantum resources for VQA training that works with almost all of the popular shallow (depth logarithmic in the number of qubits) quantum circuit structures in the literature, when used in combination with observables of low Frobenius norm. We demonstrate these savings for two important problems in quantum information for which VQAs can be used, namely, Variational Quantum State Preparation (VQSP) and Variational Quantum Circuit Synthesis (VQCS). Both problems concern identifying the right circuit parameters of an ansatz that best approximates unknown quantum states or circuits. 
    
    The benefits of AISO can be summarized as follows:

    \begin{enumerate}

        \item \emph{Exponential saving on input state copies:} To achieve arbitrarily precise estimates of all function evaluations that one encounters during an iterative optimization of the said VQA cost function, AISO consumes exponentially fewer copies of the input state compared to standard VQA, allowing one to do more iterations, achieve better approximations, and carry out extensive hyperparameter tuning.

        \item \emph{Ansatz agnostic implementation on quantum hardware:} Our method guarantees savings of input state copies for almost all the shallow ansatzes used and studied in the literature. Moreover, the operations required using the quantum device are independent of the choice of ansatz.

        \item \emph{Optimization using different ansatzes:} The combination of the two advantages given above means that, for a given unknown input state or circuit, optimization can be carried out with various types of ansatzes. One can then choose the best one that fits, with significant savings in the total usage of quantum devices.

        \item \emph{Compatibility with VQCS:} Solving VQCS involves the usage of maximally entangled states. Since ansatzes with limited entanglement are necessary for ALSO, it cannot be used for efficiently implementing VQCS, This is not the case for AISO, since it is ansatz independent.
    \end{enumerate}

The advantage is experimentally demonstrated in both use cases of interest where we show that AISO outperforms standard VQA significantly given the same number of copies in four different ansatzes used in the literature; Alternating Layered Ansatz (ALA)~\cite{Cerezo2021,Nakaji2021}, Multi-Entanglement Renormalization Ansatz (MERA)~\cite{Bridgeman2015,Franco-Rubio2017,Argüello-Luengo2022}, Hardware Efficient Ansatz (HEA)~\cite{Leone2022,Tang2021,Moreno2023} and Tree Tensor Networks (TTN)~\cite{Shi2006,Nakatani2013,Murg2010} (cf.  Figure~\ref{fig:circuits}).

We also prove that the sample complexity of AISO and, by extension, shallow shadows, can be improved when the input state being sampled is from a $2$-design, instead of a $1$-design. Finally, we argue how AISO is compatible with most of the heuristic methods used to address trainability issues called barren plateaus that one might encounter during optimization.

This paper is organized as follows: in Sections~\nameref{sec:related_works} and~\nameref{sec:background}, we briefly review some related works as well as quantum computing, shallow shadows, and VQA; in Section~\nameref{sec:AISO}, we explain the technical details of AISO; in Section~\nameref{sec:applications}, we discuss VQSP and VQCS; in Section~\nameref{sec:simulation_results}, we present the experimental results comparing AISO with standard VQA; in Section~\nameref{sec:impact_2_design}, we show how one can improve the sample complexity bounds of AISO as well as shallow shadows by assuming that the input is drawn from a state $2$-design rather than a state $1$-design; in Section~\nameref{sec:barren_plateaus}, we explain why AISO is compatible with many of the heuristic barren plateau alleviating techniques in the literature and in Section~\nameref{sec:appendix}, we give the proofs of Theorems~\ref{th:classical_cost}, \ref{th:AISO}, \ref{th:var}, \ref{th:AISO_2design}.

\section{Related Works} \label{sec:related_works}

    Classical shadows have been used to improve the sample complexity of VQAs in our previous work~\cite{Basheer2023}. But in this work, the authors use a type of shadow tomography that relies on the target observables being local. This forces the ansatz to have a weak entanglement structure such as an ALA. So, for applications such as VQSP, results can be poor if the optimal state is not approximable by ALAs. Moreover, this method cannot be used for VQCS since this requires working with the maximally entangled state. Since our method uses shallow shadows, it can be used with almost any shallow ansatz studied in the literature, to solve VQSP and VQCS.

    In \cite{Schreiber2022}, classical shadows have been used to reduce the number of times one has to call a quantum computer in quantum machine learning applications. The idea here is to use the quantum computer to generate classical shadows of an already learned VQA model so that predictions can be made of the learned model using a classical computer. But here, the learning procedure is still carried out on a quantum computer, while in AISO, the learning procedure is carried out completely on a classical computer.

    \begin{figure*}[t] 
    \centering
    \begin{tabular}{ccccc}
         \includegraphics[width=0.2\columnwidth]{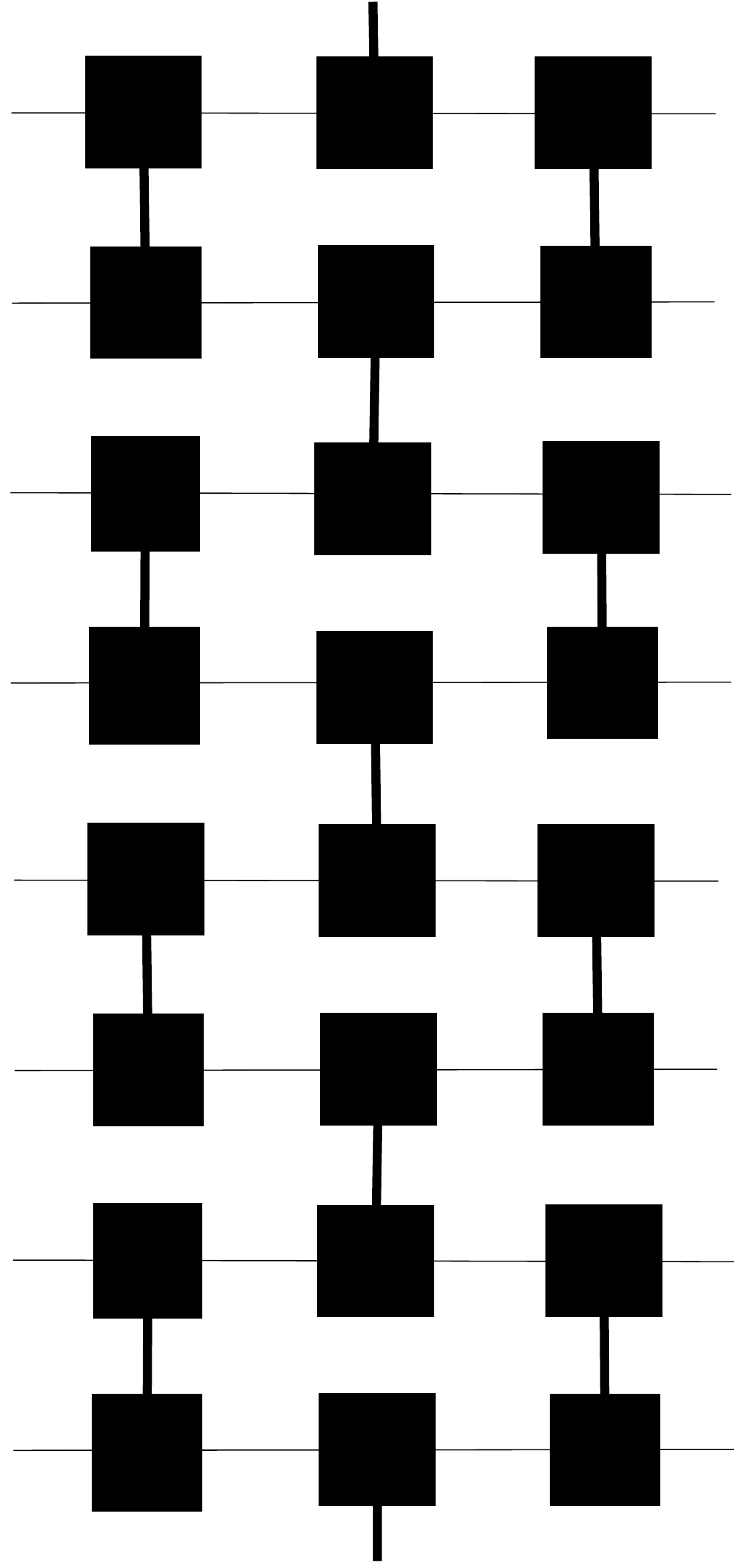} 
         &
        \includegraphics[width=0.36\columnwidth]{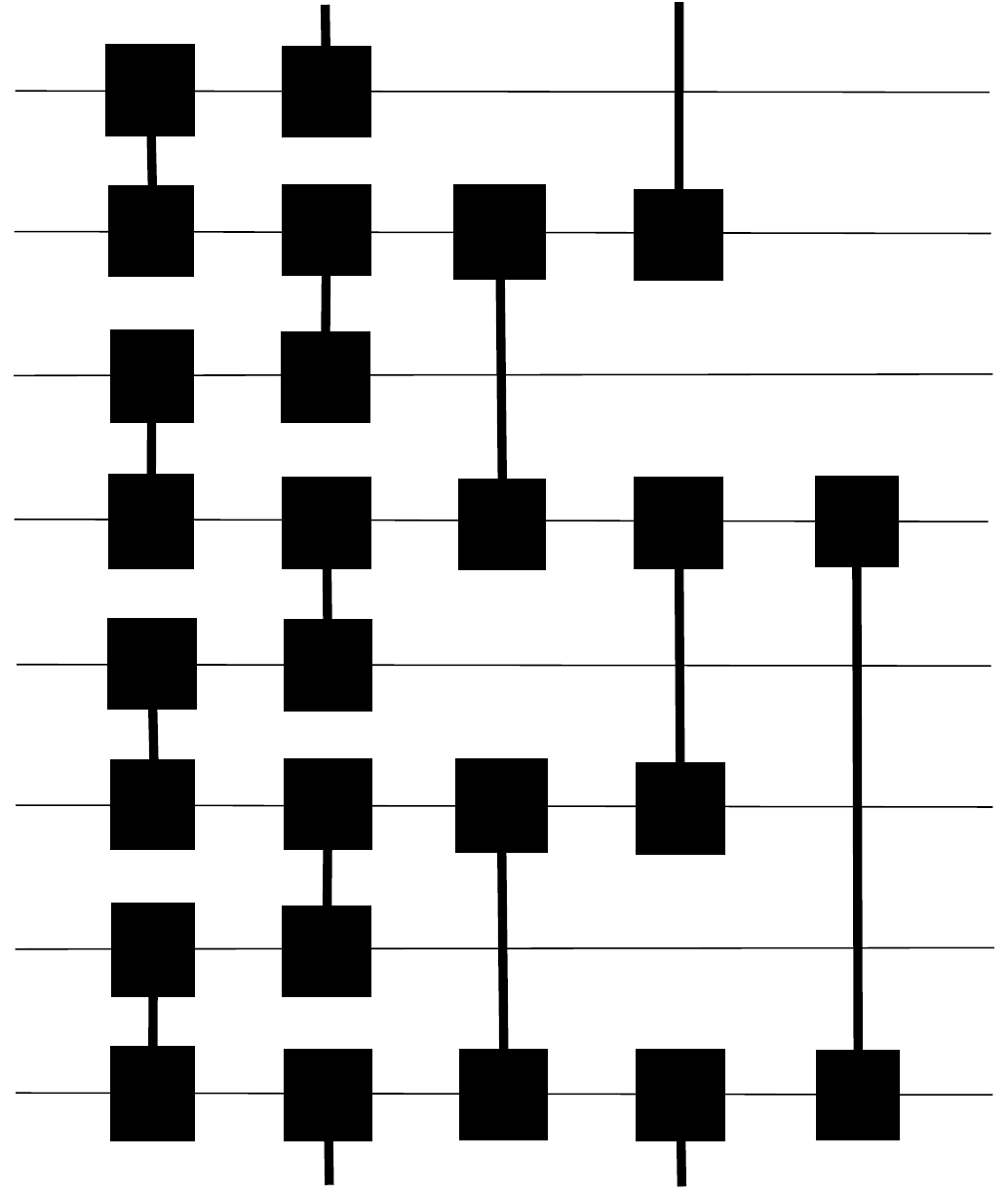}
        &
        \includegraphics[width=0.45\columnwidth]{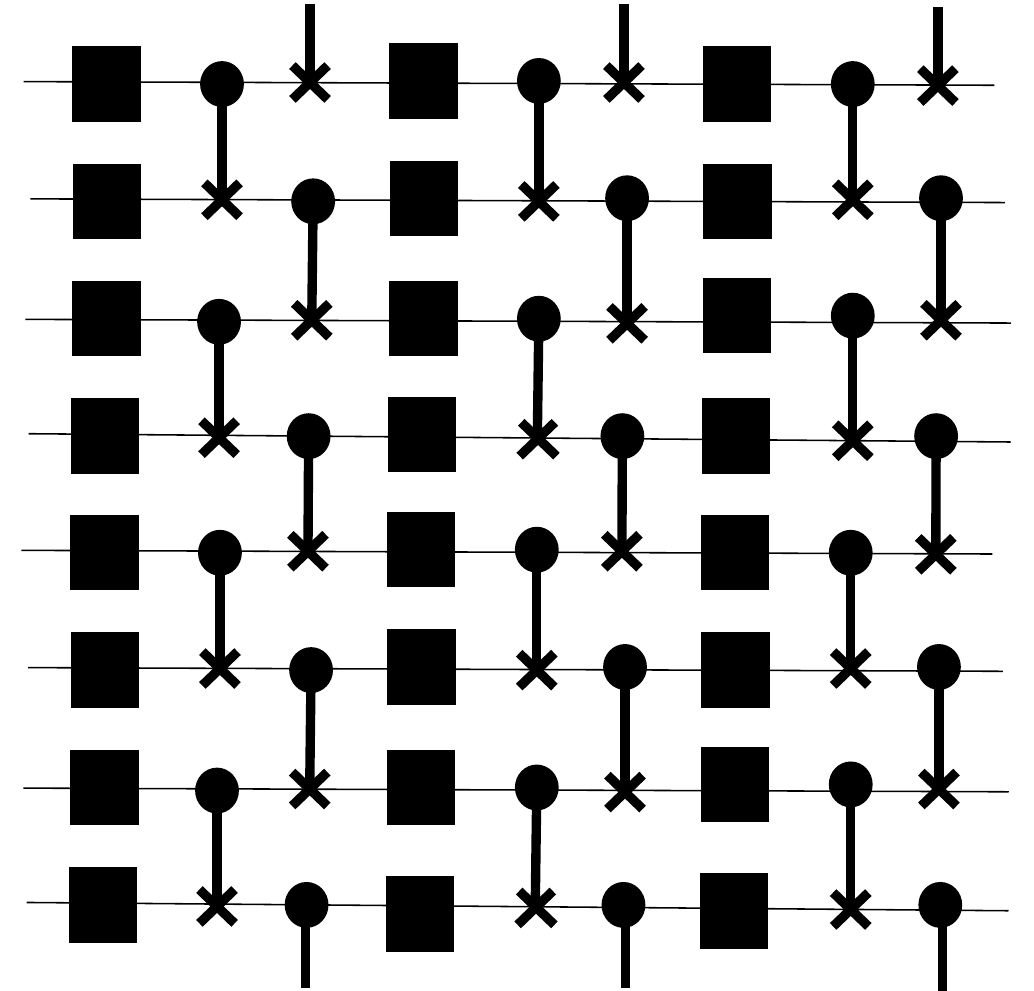}
        &
        \includegraphics[width=0.23\columnwidth]{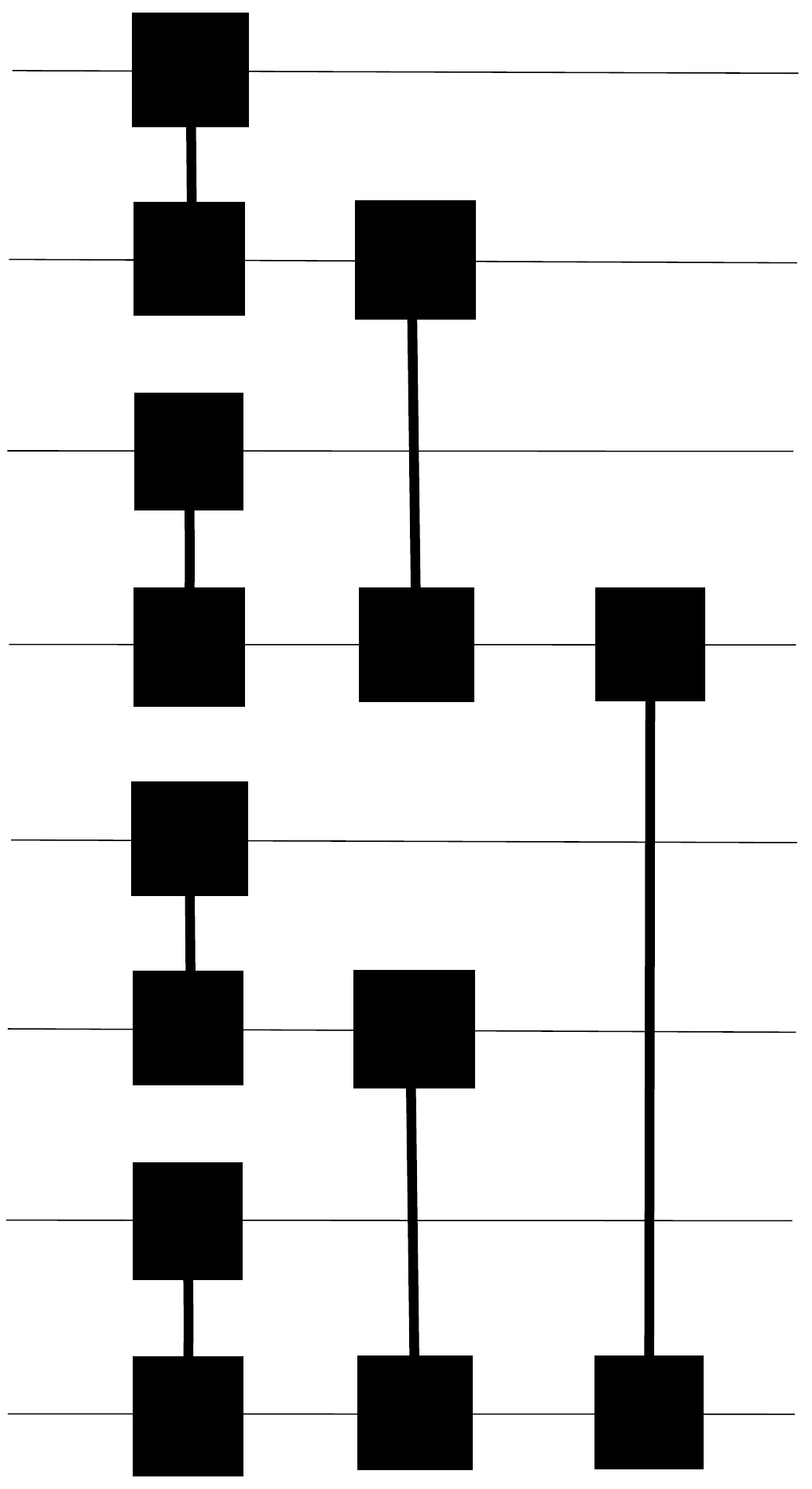} 
        &
        \includegraphics[width=0.3\columnwidth]{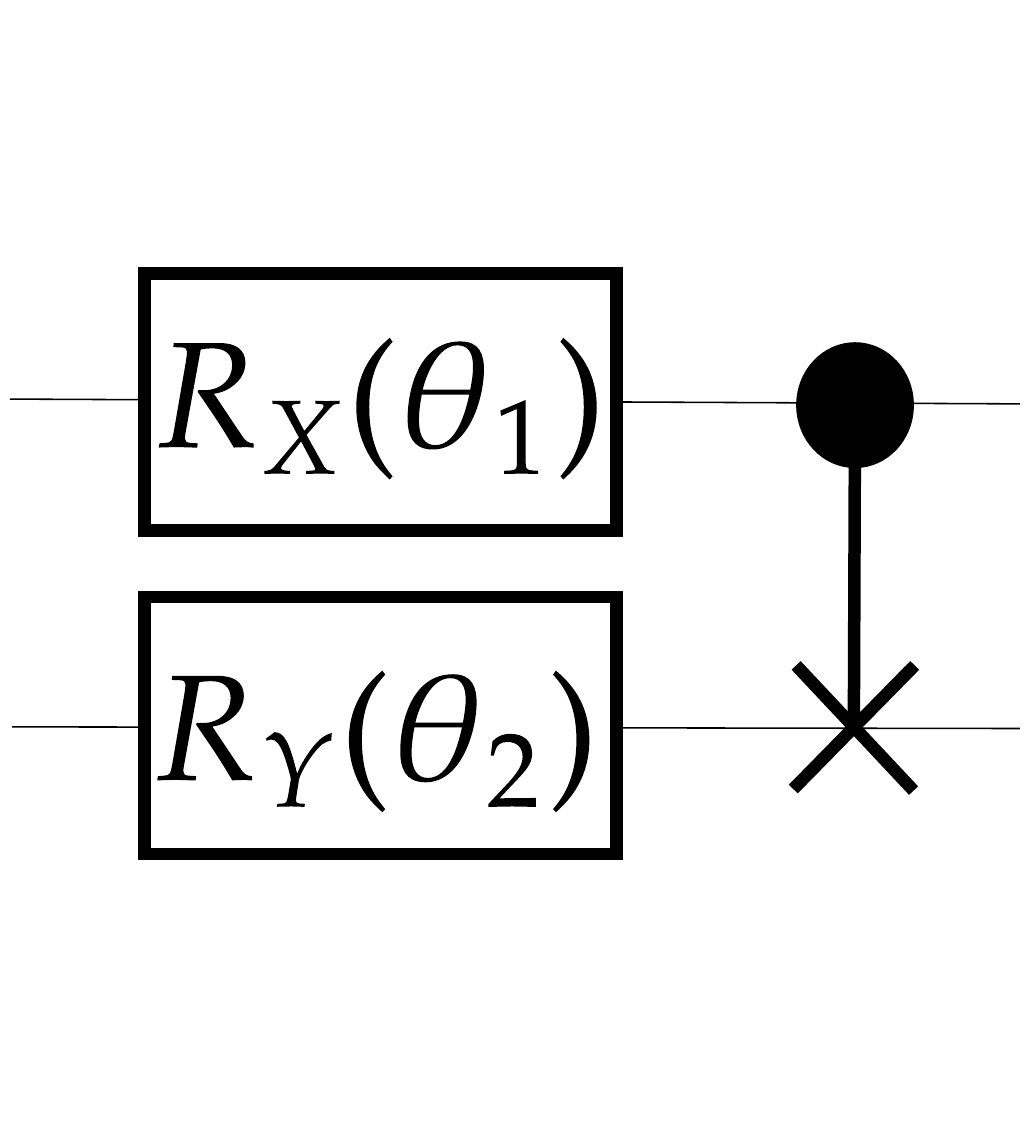}\\\\
        (a) ALA & (b) MERA & (c) HEA & (d) TTN & (e) Subcircuit used in our experiments.
    \end{tabular}
    \caption{Ansatzes used in our simulations. In (a), (b), (d), each connected pair of black boxes represent a two-qubit subcircuit. In (c), each black box is a single qubit subcircuit while the two-qubit gate is the CNOT gate. 
    } \label{fig:circuits}
    \end{figure*}

    In~\cite{Boyd2022}, the ability of classical shadows to estimate an exponentially large number of properties and their classical computational tractability is leveraged to classically approximate VQA cost functions. This is done by simultaneously computing large numbers of covariance functions and using them to solve polynomially growing numbers of root-finding problems. But what is being proposed here is a completely new optimization algorithm, while in AISO, we have a technique that is capable of improving the existing VQA optimization algorithms. So the type of function evaluations that one encounters in AISO will be the same as standard VQA, but we can estimate all of them simultaneously without consuming a lot of copies.

\section{Background} \label{sec:background}
In this section, we review quantum computing, shallow shadows, and VQAs.


    \subsection{Quantum Computing}

    Throughout this work, we use the `ket' and `bra' notations to denote column vectors $\ket{\psi}$ and their conjugate transposes $\bra{\psi}$ respectively. $\ket{i} \in \mathbb{C}^d$ is the $i^{\text{th}}$ standard basis vector. We use $\mathcal{L}(\mathbb{C}^d)$ to denote the set of all linear operators that act on $\mathbb{C}^d$.
    
    A quantum \textit{state} is defined as any positive semidefinite operator $\rho \in \mathcal{L}(\mathbb{C}^d)$ with $\text{tr}(\rho) = 1$. In quantum computing, a \textit{qubit} is the analog of a bit in classical computing and can admit any quantum state in $\mathcal{L}(\mathbb{C}^2)$ as its value, The state of an $n$-qubit system can be described using states that act on the tensor product of the $n$ $2$-dimensional vector spaces, denoted as $ \mathbb{C}^2 \otimes \dots \otimes \mathbb{C}^2 \cong \mathbb{C}^{2^n}$. 

    A unitary operator $U \in \mathcal{L}(\mathbb{C} ^ {2 ^ n})$ is called a \textit{quantum gate} acting on $n$ qubits. Such gates can transform the state of an $n$-qubit system from $\rho$ to $U \rho U^{\dag}$. The \textit{Pauli gates} are defined as 

    \begin{align}
            X = 
            \begin{bmatrix}
                0 & 1 \\
                1 & 0
            \end{bmatrix}, \  
            Y = 
            \begin{bmatrix}
                0 & -i \\
                i & 0
            \end{bmatrix},\ 
            Z = 
            \begin{bmatrix}
                1 & 0 \\
                0 & -1
            \end{bmatrix}.
    \end{align}

    Let $P = \{ \mathds{1}, X, Y, Z\}$ and let $\mathcal{P}^{(\gamma)}_n $ contain all possible distinct $4^n$ $n$-fold tensor products of the elements in $P$, with a scalar $\gamma \in \mathbb{C}$ multiplied to them. Then, the set $\mathcal{P}_n = \mathcal{P}^{(1)}_n \cup \mathcal{P}^{(-1)}_n \cup \mathcal{P}^{(i)}_n \cup \mathcal{P}^{(-i)}_n $ forms a group under matrix multiplication. The normalizer of this group in the group of unitary matrices acting on $\mathbb{C}^{2^n}$ is called the \textit{Clifford group} over $n$ qubits.

    To observe information from a quantum system in a state $\rho$, we measure the system using an \textit{observable}, which is defined as any Hermitian operator $O$. Let the spectral decomposition of $O$ be $O=\sum_i \lambda_i \ket{u_i} \bra{u_i}$. Then the measurement results in an output $\lambda_i$, with probability $ \bra{u_i} \rho \ket{u_i}$. The post-measurement state will be $\ket{u_i}$. In addition, the expected value of this random procedure is $\text{tr}(\rho O)$, concisely written as $\langle O\rangle_{\rho}$. Measurements using diagonal observables are called \textit{standard basis measurements}.

    Rank one states are called \textit{pure states}. In this case, gate operations and measurements can be fully described by any normalized eigenvector in its support.


    \begin{figure}[tbh] 
    \centering
    \begin{tabular}{c}
        \includegraphics[width=0.8\columnwidth]{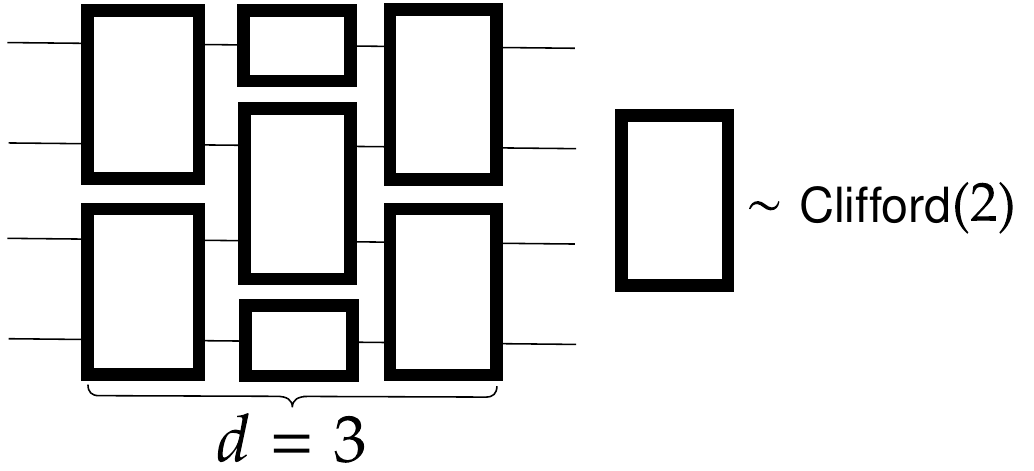}
    \end{tabular}
    \caption{The structure of the unitary ensemble used to generate shallow shadows. Each block here is a uniformly randomly sampled $2$-qubit Clifford circuit. $d$ is the number of vertical layers of these blocks in the circuit.
    } \label{fig:shadow_circuits}
    \end{figure}

    \subsection{Shallow Shadows} \label{subsec:shallow_shadows}

    For an arbitrary state $\rho$ and known observables $O_1, O_2, \dots, O_M$, estimating $\langle O_i \rangle_{\rho}$ for each $i$ using conventional quantum tomography techniques requires $\mathcal{O}(2^n \cdot M)$ copies of $\rho$. Classical shadow tomography~\cite{Huang2020} can be used to estimate all these expectations by consuming only $\mathcal{O}(\log M)$ copies. Moreover, for certain classes of observables, the dependence on $n$ is $\mathcal{O}(\text{poly}(n))$. 
    
    The first step to generate a shadow is to apply a circuit $U$ sampled from an ensemble of $n$-qubit circuits $\mathcal{U}$. Then we measure the resultant state according to the standard basis to obtain an $n$-bit string $u$. Then, a classical shadow is computed classically as 

    \begin{align}
        \hat{\rho}_{U, u} = \Delta_{\mathcal{U}} ^ {-1}(U ^ {\dag} \ket{u} \bra{u} U),
    \end{align}

    where 
    \begin{align}
        \Delta_{\mathcal{U}}(\rho) = \mathbb{E}_{U \sim \mathcal{U}} \sum_{u \in \{ 0,1\}^n} \bra{u} U \rho U^{\dag} \ket{u} U^{\dag} \ket{u} \bra{u} U
    \end{align}
    Furthermore, $\hat{\rho}_{U,u}$ gives an unbiased estimator of $\rho$ and hence $\langle O_i \rangle_{\hat{\rho}_{U,u}}$ is an unbiased estimator of $\langle O_i \rangle_{\rho}$ for all $i$.

    The number of such shadows required for precise estimations is dominated by the \textit{state-dependent shadow norm} of the traceless part of the observables, defined as
    
    \begin{align}
        \| \widetilde{O} \|_{\rho, \mathcal{U}}^2 = \mathbb{E}_{U \sim \mathcal{U}} \sum \limits_{u \in \{ 0,1\}^n} \bra{u} U \rho U ^ {\dag} \ket{u} \langle \widetilde{O} \rangle_{\hat{\rho}_{U,u}} ^ 2,
    \end{align}
    where $\widetilde{O} = O - \frac{\text{tr}(O)}{2 ^ n} \mathds1$. Using this, the sample complexity of the protocol is given by the following theorem.

    \begin{theorem}
        \cite{Huang2020}
        Let $\mathcal{U}$ be an ensemble of gates such that $\Delta_{\mathcal{U}}^{-1}$ exists, and $O_1, O_2, \dots, O_M$ be $n$-qubit observables. For any $\delta, \epsilon \in (0, 1)$, let $T_1=2\log(2M/\delta)$ and $T_2 = (34/\epsilon^2) \max_{i} \| \widetilde{O_i}\|_{\rho,\mathcal{U}}^2$. Let $\rho$ be a state with classical shadows (generated using $\mathcal{U}$) $ \hat{\rho}_{U_1, u_1}, \hat{\rho}_{U_2, u_2}, \dots, \hat{\rho}_{U_{T_1T_2}, u_{T_1T_2}}$.  Define $ \langle \widehat{O_i} \rangle_{\rho} = \mu_{T_1,T_2} (\{ \langle O_i \rangle_{\hat{\rho}_{U_j,u_j}}, \ 1\leq j\leq T_1T_2\})$, where $\mu_{T_1, T_2}$ is the median-of-means estimator (median of $T_1$ means of $T_2$ values each). Then, with probability at least $1-\delta$, we have $ |\langle \widehat{O_i} \rangle_{\rho} - \langle O_i \rangle_{\rho} | \leq \epsilon$ for all $i$.

        \label{th:state_shadow_theorem}
    \end{theorem}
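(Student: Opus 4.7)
The plan is to follow the standard median-of-means strategy used in the original classical shadow analysis, applied to the single-shadow estimator $X_i(U,u) := \langle O_i\rangle_{\hat\rho_{U,u}}$ for each observable $O_i$. First I would verify that $X_i$ is an unbiased estimator of $\langle O_i\rangle_\rho$: by the definition of $\hat\rho_{U,u}$ and linearity of expectation, $\mathbb{E}_{U,u}[\hat\rho_{U,u}] = \Delta_{\mathcal U}^{-1}(\mathbb{E}_{U,u}[U^\dagger|u\rangle\langle u|U]) = \Delta_{\mathcal U}^{-1}\Delta_{\mathcal U}(\rho) = \rho$, so $\mathbb{E}[X_i] = \tr(O_i\rho)$. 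This step is immediate from the definitions of $\Delta_{\mathcal U}$ and $\hat\rho_{U,u}$ stated in the excerpt.

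Next I would bound the variance of $X_i$ by the shadow norm. The key observation is that shifting an observable by a multiple of the identity only shifts $X_i$ by a deterministic constant, so $\mathrm{Var}[X_i] = \mathrm{Var}[\langle\widetilde{O_i}\rangle_{\hat\rho_{U,u}}]$. Using $\mathrm{Var}[Y]\le\mathbb{E}[Y^2]$ together with the defining formula for $\|\widetilde{O_i}\|_{\rho,\mathcal U}^2$ as the weighted second moment of $\langle\widetilde{O_i}\rangle_{\hat\rho_{U,u}}$, one gets $\mathrm{Var}[X_i]\le\|\widetilde{O_i}\|_{\rho,\mathcal U}^2$. Averaging $T_2$ independent shadows thus produces a block-mean with variance at most $\|\widetilde{O_i}\|_{\rho,\mathcal U}^2/T_2\le\epsilon^2/34$ by the choice of $T_2$, and Chebyshev's inequality then yields that each block-mean deviates from $\langle O_i\rangle_\rho$ by more than $\epsilon$ with probability at most $1/34$.

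Finally I would invoke the standard median-of-means concentration: given $T_1$ independent block-means, each an $\epsilon$-approximation with probability at least $1-1/34$, the median fails to be an $\epsilon$-approximation only when at least half of the block-means fail. A Hoeffding (binomial-tail) bound gives a failure probability decaying as $\exp(-cT_1)$ for some absolute constant $c$, which the choice $T_1=2\log(2M/\delta)$ reduces to at most $\delta/M$ per observable. A union bound over the $M$ observables then concludes the argument with overall failure probability at most $\delta$.

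I expect the main obstacle to be purely bookkeeping rather than conceptual: pinning down the exact numerical constants ($34$ in $T_2$, the factor $2$ in $T_1$) so that the Chebyshev threshold and the binomial tail bound for the median combine cleanly with the union bound. The subtlety that must be handled carefully is the passage from $O_i$ to its traceless part $\widetilde{O_i}$ in the variance step, since the shadow norm is defined only for the traceless part and one must argue that the identity shift does not affect the random fluctuations. None of these steps requires more than elementary probabilistic tools once unbiasedness and the variance-shadow-norm identity are established.
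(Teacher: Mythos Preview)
The paper does not actually prove this statement: Theorem~\ref{th:state_shadow_theorem} is quoted verbatim from \cite{Huang2020} as background material, and the Technical Appendix only contains proofs of Theorems~3--6. There is therefore no ``paper's own proof'' to compare your proposal against. That said, your sketch is precisely the standard median-of-means argument from the original classical-shadows paper (unbiasedness of $\hat\rho_{U,u}$, variance bounded by the shadow norm via the traceless shift, Chebyshev on block means, Hoeffding on the median, union bound over $M$), and it is correct.
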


    One way to remove the dependency on $\rho$ and get the worst-case performance guarantees is to replace $\| \widetilde{O}_i\|_{\rho, \mathcal{U}}$ with $\max_{\sigma: \text{state}} \| \widetilde{O_i}\|_{\sigma, \mathcal{U}}$, defined as the \textit{shadow norm}.
    
    In~\cite{Huang2020}, it was shown that when the ensemble is the Clifford group over $n$ qubits, the shadow norm of the observables, and hence the sample complexity, are proportional to the Frobenius norm. But the implementation requires very deep circuits, ruling itself out for NISQ devices.

    Hence in~\cite{Bertoni2023}, the authors propose an ensemble of shallow-depth circuits $\mathcal{U}_d$ (with depth $d$), given in Figure~\ref{fig:shadow_circuits} that achieves similar performance guarantees. Each two-qubit subcircuit here is a uniformly randomly sampled two-qubit Clifford gate. The shadow can be classically computed and stored in the matrix product state form, with cost $\mathcal{O}(2 ^ d)$. Formally, we have

    \begin{theorem} \label{th:shallow_shadows}
        \cite{Bertoni2023}~If $d=\Theta(\log n)$, then for any observable $O$ with $\text{tr}(O)=0$, we have $\| O\|^2_{\mathds1/2^n, \mathcal{U}_d} \leq 4 \| O\|_F^2$.
    \end{theorem}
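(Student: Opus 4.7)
My plan is to reduce the shadow norm to an explicit second-moment computation over the brick-wall Clifford ensemble and then bound the result via a tensor-network / transfer-matrix analysis.

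\textbf{Step 1 (replica formulation).} Specialize the state-dependent shadow norm to $\rho = \mathds1/2^n$. Since $\bra{u}U\rho U^{\dag}\ket{u} = 2^{-n}$ independently of $u$ and $U$, the Born weight is trivial and the definition collapses to
\begin{align*}
\|O\|^2_{\mathds1/2^n,\mathcal{U}_d} = 2^{-n}\,\mathbb{E}_{U\sim\mathcal{U}_d}\sum_u \tr\!\bigl(O\,\Delta_{\mathcal{U}_d}^{-1}(U^{\dag}\ket{u}\!\bra{u}U)\bigr)^2 .
\end{align*}
Writing the square as a product over two replicas, this becomes the contraction of $(\Delta_{\mathcal{U}_d}^{-1}(O))^{\otimes 2}$ with the two-copy operator $\mathbb{E}_U (U^{\dag}\otimes U^{\dag})\sum_u(\ket{u}\!\bra{u})^{\otimes 2}(U\otimes U)$, so everything reduces to a second-moment calculation over $\mathcal{U}_d$.

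\textbf{Step 2 (Pauli-basis reduction).} Expand $O = \sum_{P\neq I} c_P P$ so that $\|O\|_F^2 = 2^n\sum_P c_P^2$. Each brick-wall block is drawn from the 2-qubit Clifford group, which is a unitary 3-design on two qubits, so the expectation of a block's action on $(\sigma\otimes\tau)$ is given by explicit moment formulas. Using the $X/Y/Z$ symmetry within each non-identity site, the two-replica state at each qubit collapses to a single binary variable (``identity'' vs. ``occupied''), and the averaged brick-wall circuit becomes a one-dimensional tensor network in these occupancy variables with a fixed $2\times 2$ local transfer matrix $T$. The shadow norm then factorizes as
\begin{align*}
\|O\|^2_{\mathds1/2^n,\mathcal{U}_d} = \sum_{P\neq I} \frac{2^n\,c_P^2}{\lambda_d(P)},
\end{align*}
where $\lambda_d(P)$ is the light-cone contraction of $T$ determined by the support pattern of $P$.

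\textbf{Step 3 (depth tuning).} It remains to show $\lambda_d(P) \ge 1/4$ uniformly over non-identity Paulis $P$ once $d = \Theta(\log n)$, which immediately gives the claimed $\|O\|^2_{\mathds1/2^n,\mathcal{U}_d} \le 4\|O\|_F^2$. Diagonalizing $T$, its two eigenvalues control, respectively, the component that stays at each site and the component that leaks into the neighbouring light cone. For $d$ logarithmic in $n$, the brick-wall light cone widens to $\Theta(\mathrm{poly}(n))$ sites, the leakage eigenvalue is exponentially suppressed, and the dominant eigenvalue contribution can be evaluated in closed form, giving $\lambda_d(P) \ge 1/4$ uniformly in the Pauli configuration.

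\textbf{Main obstacle.} The technical heart is Step 2/3: carrying out the brick-wall bookkeeping uniformly over all Pauli patterns $P$ (including those straddling the boundary between the even and odd 2-Clifford layers), computing the $T$-moments via the 2-Clifford second-moment formulas, and verifying that the choice $d = \Theta(\log n)$ is large enough for the transfer-matrix product to converge to a value bounded below by $1/4$ while being small enough that the effective light cone remains $\mathrm{poly}(n)$. Steps 1 and the Pauli-orthogonality portion of Step 2 are standard shadow-norm manipulations; the real work is the statistical-mechanical analysis of $T^d$.
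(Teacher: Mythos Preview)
The paper does not prove this theorem at all: it is quoted verbatim from \cite{Bertoni2023} (the citation appears immediately after the theorem label), and the Technical Appendix only contains proofs of Theorems~3--6. So there is no ``paper's own proof'' to compare your proposal against.

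That said, your sketch is a faithful outline of the statistical-mechanics argument that \cite{Bertoni2023} actually uses: reduce the locally scrambled shadow norm to a second-moment over the brick-wall Clifford ensemble, diagonalize the measurement channel $\Delta_{\mathcal{U}_d}$ in the Pauli basis (so that $\Delta_{\mathcal{U}_d}(P)=\lambda_d(P)\,P$ for Pauli-diagonal eigenvalues $\lambda_d(P)$), and then show via a transfer-matrix computation that $\lambda_d(P)\ge 1/4$ uniformly once $d=\Theta(\log n)$. Your Step~1 and the Pauli-orthogonality part of Step~2 are indeed routine; the genuine content is exactly where you place the ``main obstacle,'' namely the uniform lower bound on the contraction $\lambda_d(P)$ over all non-identity Pauli support patterns. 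You have correctly identified that this is where the $\Theta(\log n)$ depth enters, balancing light-cone growth against eigenvalue decay. One small caution: your displayed formula in Step~2 implicitly assumes that the locally scrambled shadow norm is diagonal in the Pauli basis with coefficient $1/\lambda_d(P)$; this is true here because the brick-wall Clifford ensemble is Pauli-invariant, but it is worth stating as a lemma rather than absorbing it into the transfer-matrix claim.
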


    
    The term $ \| O\|^2_{\mathds1/2^n, \mathcal{U}_d}$ is called the \textit{locally scrambled shadow norm}. Notice that for any ensemble of states $\mathcal{D}_1$ for which $ \mathbb{E}_{\rho \sim \mathcal{D}_1} (\rho) = \mathds1/2^n$ (also called \textit{state $1$-designs} when all states are pure), $\mathbb{E}_{\rho \sim \mathcal{D}_1} \| O\|_{\rho, \mathcal{U}}^2 = \| O\|_{\mathds1/2^n,\mathcal{U}}^2$ for any gate ensemble $\mathcal{U}$. So, we can view $ \| O\|_{\mathds1/2^n, \mathcal{U}}$ as a quantity that intuitively characterizes the sample complexity of a shadow protocol for a ``typical" state or the performance of the protocol on average, similar to how the shadow norm describes the worst-case performance. This is more apparent when all states in $\mathcal{D}_1$ are pure since then, sampling from $\mathcal{D}_1$ is equivalent to sampling uniformly (according to the spherical measure) from the set of all pure states up to one statistical moment. Moreover, if the observables can be represented using tensor networks with certain properties, then each $\langle O_i \rangle_{\hat{\rho}_{U,u}}$ can be computed classically efficiently.

    \subsection{Variational Quantum Algorithms} \label{subsec:vqa}

Parameterized quantum circuits can be used to encode various optimization problems that one encounters in quantum information. The structure of the circuit used is called an \textit{ansatz}. We use $U(\boldsymbol{\theta})$ to denote a parameterized circuit, where $\boldsymbol{\theta}$ is a vector of parameters. In standard VQA, we use $U(\boldsymbol{\theta})$ to estimate the value of a target function and then optimize the parameters by feeding the output to a classical iterative optimizer.

For any ansatz $U$, we define $\rho(\boldsymbol{\theta}) \coloneqq U(\boldsymbol{\theta}) \rho U(\boldsymbol{\theta}) ^ {\dag}$.  Our focus in this paper is on the function defined (over $\boldsymbol{\theta}$) as
\begin{align}  \label{eq:general_optimization}
    \langle O \rangle_{\rho(\boldsymbol{\theta})} = \text{tr}(U(\boldsymbol{\theta}) \rho U(\boldsymbol{\theta})^{\dag} O),
\end{align}
where $\rho$ is the input quantum state and $O$ is an output observable, and we aim to find the parameters that maximize it. One can estimate $\langle O \rangle_{\rho(\boldsymbol{\theta})}$ for any $\boldsymbol{\theta}$ by repeated measurements, after the application of $U(\boldsymbol{\theta})$ on $\rho$. Given this ability, the gradient of $ \langle O \rangle_{\rho(\boldsymbol{\theta})}$ can also be estimated using standard methods such as finite differencing or quantum-specific approaches such as the parameter shift rule~\cite{Mitarai2018}. Problems in quantum information that can be reduced to an instance of optimization of Eq~\eqref{eq:general_optimization} include variational quantum eigensolver~\cite{Peruzzo2014}, quantum autoencoder~\cite{Romero2017}, as well as VQSP and VQCS.

\section{Ansatz Independent Shadow Optimization} \label{sec:AISO}

    In this section, we explain the main idea and theoretical results behind AISO.

    \subsection{Method} \label{subsec:methods}

        For any quantum circuit $V$, for any qubit $i$, we define the number of times a gate touches or crosses the qubit wire as $R_{V,i}$. Let $R_V = \max_i R_{V,i}$. We require our ansatz $U$ to have $ R_U \in \mathcal{O}(\log n)$. Note that most shallow ansatzes used in the literature will satisfy this. Let $\langle O\rangle_{\rho(\boldsymbol{\theta} ^ {(1)})},\langle O\rangle_{\rho(\boldsymbol{\theta} ^ {(2})}, \dots, \langle O\rangle_{\rho(\boldsymbol{\theta} ^ {(C)})}$ be function evaluations that one encountered while optimizing Eq~(5) using an iterative optimization algorithm.
        
        Define $W_{O}(\boldsymbol{\theta}) = U(\boldsymbol{\theta})^{\dag} O U(\boldsymbol{\theta})$. Each function evaluation can be seen as estimating the expectation of $\rho$ with these parameterized observables because 
        \begin{align}\label{eq:costfun}
            \langle O \rangle_{\rho(\boldsymbol{\theta})} =  \text{tr}( U(\boldsymbol{\theta}) \rho U(\boldsymbol{\theta}) ^ {\dag} O) = \langle W_{O}(\boldsymbol{\theta}) \rangle_{\rho}.
        \end{align}

        Moreover, the Frobenius norm remains invariant since $\| O\|_F^2 = \| V O V^{\dag}\|_F^2$ for any unitary $V$.

        Now, using Theorems~\ref{th:state_shadow_theorem} and~\ref{th:shallow_shadows}, we can estimate all $C$ function evaluations using shallow shadows, and the AISO protocol goes as follows.

        \begin{enumerate}
            \item Load $T_1T_2 $ shallow shadows of $\rho$, where $T_1 = \mathcal{O}(\log C)$ and $ T_2= \mathcal{O}(\| O\|_F^2)$. Let them be $\hat{\rho}_{U_1, u_1}, \hat{\rho}_{U_2, u_2}, \dots, \hat{\rho}_{U_{T_1T_2}, u_{T_1T_2}}$

            \item Use the iterative optimization algorithm to optimize the target function             
            \begin{equation}\label{eq:tfun}
            \langle \widehat{W_O}(\boldsymbol{\theta})\rangle_{\rho} \coloneqq \mu_{T_1, T_2}(\{ \langle W_O(\boldsymbol{\theta})\rangle_{\hat{\rho}_{U_j,u_j}} \ 1\leq j\leq T_1T_2\}).
            \end{equation}
            
        \end{enumerate}

        The cost of classical computation is dominated by the cost of computing $ \langle \widehat{W_O}(\boldsymbol{\theta})\rangle_{\rho}$ classically. In this case, we have

        \begin{theorem} \label{th:classical_cost}
            In AISO, for any quantum ansatz $U$ with $R_U \in \mathcal{O}(\log n)$,  $\langle W_O(\boldsymbol{\theta}) \rangle_{\rho}$ can be classically evaluated with cost $\mathcal{O}(\text{poly}(n) \cdot \log C \cdot \| O\|_F^2) $ for VQSP and VQCS. 
        \end{theorem}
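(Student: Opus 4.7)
The plan is to factor the total classical cost of computing $\langle \widehat{W_O}(\boldsymbol{\theta})\rangle_\rho$ in \eqref{eq:tfun} as $T_1T_2$ single-shadow overlaps of the form $\text{tr}(\hat{\rho}_{U_j,u_j}\, W_O(\boldsymbol{\theta}))$, followed by an $O(T_1T_2)$ median-of-means aggregation. By Theorems~\ref{th:state_shadow_theorem} and~\ref{th:shallow_shadows} we have $T_1 = O(\log C)$ and $T_2 = O(\|O\|_F^2)$ (absorbing $1/\epsilon^2$ into the big-$O$), so it suffices to show that a single overlap can be evaluated in $\text{poly}(n)$ time; multiplying through then yields the claimed $O(\text{poly}(n)\cdot \log C\cdot \|O\|_F^2)$ bound.

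To evaluate one overlap, I would exploit that both operators admit polynomial-bond-dimension tensor-network representations. The shallow shadows produced by $\mathcal{U}_d$ in Figure~\ref{fig:shadow_circuits} have $d = \Theta(\log n)$, so each $\hat{\rho}_{U_j,u_j}$ has an MPO description of bond dimension $O(2^d) = \text{poly}(n)$, as already observed in the \emph{Shallow Shadows} subsection. What remains is to describe $W_O(\boldsymbol{\theta}) = U(\boldsymbol{\theta})^\dagger O\, U(\boldsymbol{\theta})$ by a similarly low-bond-dimension MPO; then the overlap is a standard MPO--MPO contraction computable in time polynomial in $n$ and in the bond dimensions involved.

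For $W_O(\boldsymbol{\theta})$ I would invoke the light-cone bound implied by $R_U \in O(\log n)$: conjugating any local factor of $O$ by $U(\boldsymbol{\theta})$ enlarges its support by a window of at most $R_U$ qubits on each side, producing a local MPO factor of bond dimension at most $2^{O(R_U)} = \text{poly}(n)$. It remains to check that $O$ itself has polynomial bond dimension for the two applications. For VQSP the natural cost observable is a product projector such as $O = \ket{0}\bra{0}^{\otimes n}$, of bond dimension one; for VQCS the relevant $O$ is the maximally entangled projector $\ket{\Phi}\bra{\Phi}$, which admits an MPO of bond dimension two. In both cases $W_O(\boldsymbol{\theta})$ is a $\text{poly}(n)$-bond-dimension MPO, and its overlap with the shadow MPO contracts in $\text{poly}(n)$ time.

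The main obstacle is the VQCS case: although $\ket{\Phi}\bra{\Phi}$ has intrinsic bond dimension two, it is globally entangled across the two halves of the register, so one must verify carefully that after conjugation by the log-depth ansatz the resulting MPO bond dimension remains $2^{O(R_U)}$ rather than blowing up to $2^{O(n)}$. Once both the shadow and the observable MPOs are pinned at polynomial bond dimension, an explicit site-by-site contraction schedule delivers the $\text{poly}(n)$ per-overlap cost, and aggregating over the $T_1T_2$ shadows completes the theorem.
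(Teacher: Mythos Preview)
Your decomposition into $T_1T_2$ per-shadow overlaps and the VQSP analysis match the paper's: the paper invokes Jozsa's result that any circuit-like tensor network with at most $R_U$ gates touching each wire contracts in $O(n\cdot\text{poly}(2^{R_U}))$ time via a wire-by-wire sweep, which is your MPO bond-dimension bound rephrased as a running free-index count.

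The gap is precisely the VQCS obstacle you flag but do not close. Your claim that $\ket{\Phi}\bra{\Phi}$ has MPO bond dimension two is ordering-dependent: it holds only under the interleaved layout $(1,n{+}1,2,n{+}2,\dots)$, whereas in the linear ordering $(1,\dots,2n)$ the operator Schmidt rank across the middle cut is $4^n$. Since the shallow-shadow MPS, the observable $\ket{\Phi}\bra{\Phi}$, and the $2n$-qubit ansatz each prefer a different one-dimensional ordering, you never exhibit a single layout in which all three simultaneously have $\text{poly}(n)$ bond dimension, so the ``site-by-site contraction schedule'' you appeal to at the end is not yet justified. The paper avoids this altogether by dropping the MPO viewpoint for VQCS: it writes the per-shadow quantity directly as $\bra{U(\boldsymbol{\theta})}\hat\rho\ket{U(\boldsymbol{\theta})}$ and realises $\ket{U(\boldsymbol{\theta})}$ by bending the output wires of the circuit for $U(\boldsymbol{\theta})$ back around (Figure~\ref{fig:tensor_networks}(b)), so that qubit $i$ of the first register becomes adjacent to qubit $i$ of the second. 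The resulting two-dimensional network is then contracted \emph{ring by ring} rather than wire by wire; the number of free indices between consecutive rings is governed only by the gates of $U(\boldsymbol{\theta})$ and of the depth-$d$ shadow circuit that straddle the cut, which is still $O(\log n)$. This wire-bending plus ring-by-ring schedule is the missing ingredient that your MPO argument would need to supply.
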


    \subsection{Sample Complexity}
        In this section, we prove the bounds on the sample complexity when the input state is sampled from a state $1$-design. The goal is to show that on average, AISO requires only a number of copies that is logarithmically dependent on $M$ and linearly dependent on $\| O\|_F^2$.

        \begin{theorem} \label{th:AISO}
            Let $d = \Theta(\log n)$ and $\rho$ be an $n$-qubit pure state sampled from a state 1-design $\mathcal{D}_1$. For any $\delta, \epsilon \in (0,1)$, $m > 1/\delta$, and any $C>0$, let
            \begin{align}\label{eq:T_11design}
            T_1 \geq 2 \log \left(\frac{2(m-1)C}{m\delta-1}\right),\ T_2 \geq \frac{136}{\epsilon^2} m \| O\|_F^2.
            \end{align} 
            
            Then for any parameter vectors $\boldsymbol{\theta} ^ {(1)}, \boldsymbol{\theta} ^ {(2)}, \dots, \boldsymbol{\theta} ^ {(C)}$, all values $\langle W_O(\boldsymbol{\theta} ^ {(c)}) \rangle_{\rho}$, $1\leq c\leq C$, defined as in Eq~\eqref{eq:costfun} can be estimated using $ \langle \widehat{W_O}(\boldsymbol{\theta} ^ {(c)}) \rangle_{\rho}$ defined as in Eq~\eqref{eq:tfun} so that with probability at least $1-\delta$, we have $| \langle W_O(\boldsymbol{\theta} ^ {(c)})\rangle_{\rho} - \langle \widehat{W_O}(\boldsymbol{\theta} ^ {(c)}) \rangle_{\rho} | \leq \epsilon$
            for all $c$.
        \end{theorem}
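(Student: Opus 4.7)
The plan is to reduce the statement to Theorem~\ref{th:state_shadow_theorem} applied to the rotated-observable family $\{W_O(\boldsymbol{\theta}^{(c)})\}_{c=1}^{C}$, while upgrading its $\rho$-dependent shadow-norm hypothesis into a bound that depends only on $\|O\|_F^2$. The three ingredients I would use are: Frobenius-norm invariance under unitary conjugation; Theorem~\ref{th:shallow_shadows} at depth $d = \Theta(\log n)$; and Markov's inequality applied under $\rho \sim \mathcal{D}_1$, whose $1$-design property gives $\mathbb{E}_{\rho}[\rho] = \mathds{1}/2^n$.

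First, I would note that for each $c$ the observable $W_O(\boldsymbol{\theta}^{(c)}) = U(\boldsymbol{\theta}^{(c)})^{\dag} O U(\boldsymbol{\theta}^{(c)})$ is unitarily conjugate to $O$, so $\|\widetilde{W_O(\boldsymbol{\theta}^{(c)})}\|_F \leq \|O\|_F$. Combining the $1$-design identity $\mathbb{E}_{\rho \sim \mathcal{D}_1}\|\widetilde{W_O(\boldsymbol{\theta}^{(c)})}\|^2_{\rho, \mathcal{U}_d} = \|\widetilde{W_O(\boldsymbol{\theta}^{(c)})}\|^2_{\mathds{1}/2^n, \mathcal{U}_d}$ with Theorem~\ref{th:shallow_shadows} then yields $\mathbb{E}_{\rho}\|\widetilde{W_O(\boldsymbol{\theta}^{(c)})}\|^2_{\rho, \mathcal{U}_d} \leq 4\|O\|_F^2$ uniformly in $c$. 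I would then introduce the good event $G = \{\max_{c} \|\widetilde{W_O(\boldsymbol{\theta}^{(c)})}\|^2_{\rho, \mathcal{U}_d} \leq 4m\|O\|_F^2\}$ and control $\Pr(G^c)$ via Markov. On $G$, the choice $T_2 \geq 136\, m\, \|O\|_F^2/\epsilon^2 = (34/\epsilon^2)\cdot 4m\|O\|_F^2$ satisfies Huang's shadow-norm hypothesis simultaneously for all $C$ rotated observables, so Theorem~\ref{th:state_shadow_theorem} with $M = C$ and target failure probability $\delta_1$ guarantees that every median-of-means estimate lies within $\epsilon$ of its true value with conditional probability at least $1-\delta_1$, provided $T_1 \geq 2\log(2C/\delta_1)$.

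To close the proof, I would apply the tight two-event decomposition
\begin{equation}
\Pr(\text{failure}) \;\leq\; (1 - \Pr(G^c))\,\delta_1 + \Pr(G^c) \;=\; \delta_1 + \Pr(G^c)\,(1 - \delta_1),
\end{equation}
set $\delta_1 = (m\delta - 1)/(m - 1)$, and observe that if $\Pr(G^c) \leq 1/m$ the right-hand side simplifies to exactly $\delta$. Substituting this $\delta_1$ into $T_1 \geq 2\log(2C/\delta_1)$ recovers the stated bound $T_1 \geq 2\log\!\left(2(m-1)C/(m\delta-1)\right)$, and the $T_2$ bound is already in hand. The main obstacle I anticipate is the Markov step $\Pr(G^c) \leq 1/m$: applying Markov to each $c$ individually and then union-bounding over $c$ would only give $C/m$, which would force an unwanted $C$-factor into $T_2$. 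Getting this down to $1/m$—so that the $C$-dependence is funneled into the logarithmic $T_1$ factor rather than the per-mean sample count $T_2$—is the delicate step that ultimately yields the theorem's logarithmic-in-$C$ sample complexity; handling it carefully (rather than via a naive union bound) is where the specific constants $m-1$ and $m\delta-1$ in the statement originate.
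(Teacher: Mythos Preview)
Your outline is essentially identical to the paper's proof: Frobenius invariance under conjugation, Theorem~\ref{th:shallow_shadows} combined with the $1$-design identity to get $\mathbb{E}_{\rho}\|\widetilde{O}\|_{\rho,\mathcal{U}_d}^2 \le 4\|O\|_F^2$, Markov's inequality with threshold factor $m$, then Theorem~\ref{th:state_shadow_theorem} on the good event with failure parameter $\delta'$, and finally the product decomposition $(1-\delta')(1-1/m)=1-\delta$.

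The one place you diverge is your closing paragraph. You suspect that the constants $(m-1)$ and $(m\delta-1)$ encode some clever treatment of the $C$-fold Markov union bound. They do not: in the paper, Markov is applied \emph{once} to a single observable $\widetilde{O}$, and the resulting shadow-norm bound is then used directly as the $T_2$ hypothesis of Theorem~\ref{th:state_shadow_theorem} for all $C$ rotated observables, without any union bound over $c$ at that stage. The constants $(m-1)$ and $(m\delta-1)$ arise purely from the algebra
\[
(1-\delta')\Bigl(1-\tfrac{1}{m}\Bigr)=1-\delta \quad\Longrightarrow\quad \delta'=\frac{m\delta-1}{m-1},
\]
which you already carried out; the entire $C$-dependence is absorbed by the $\log(2C/\delta')$ in $T_1$ coming from Theorem~\ref{th:state_shadow_theorem}. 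So the ``delicate step'' you anticipate is simply not addressed in the paper's argument, and there is no hidden device you are missing.
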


        Since estimating $C$ evaluations in standard VQA requires preparing $U(\boldsymbol{\theta} ^ {(c)})$ for all $c$ and measuring each of them multiple times, the total number of copies required would be $\mathcal{O}(C)$, which is exponentially higher than AISO. One key reason why this is the case is that in standard VQA, we cannot reuse the measurement results, since each of them was conducted specifically to estimate $\langle W_O(\boldsymbol{\theta} ^ {(c)})\rangle_{\rho} $ for some $c$. Meanwhile, in AISO, all quantum measurements made are \textit{independent} of $\boldsymbol{\theta}^{(c)}$, and all these measurements are used while estimating all the expectations.


    Although the constants appear large, since we use union bounds as well as a few loose constants in Theorem~\ref{th:shallow_shadows}, in practice significantly lesser copies than what is suggested in Theorem~\ref{th:AISO} suffice. We explore this in detail in our experimental results. The space complexity of the protocol is dominated by the storage of shallow shadows. Each shadow is an MPS with maximum bond dimension at most $2 ^ {d-1}$. This means that each shadow can be stored using at most $n2^{d}$ complex numbers and hence the total space complexity is $nT_1T_22^{d}$. So, when $d = \mathcal{O}(\log n)$, the space complexity is $\mathcal{O}(\text{poly}(n) \cdot T_1T_2)$.

    \begin{figure*}[tbh]
    \centering
        \begin{tabular}{cccc}
        ALA & MERA & HEA & TTN \\
         \includegraphics[width=0.5\columnwidth]{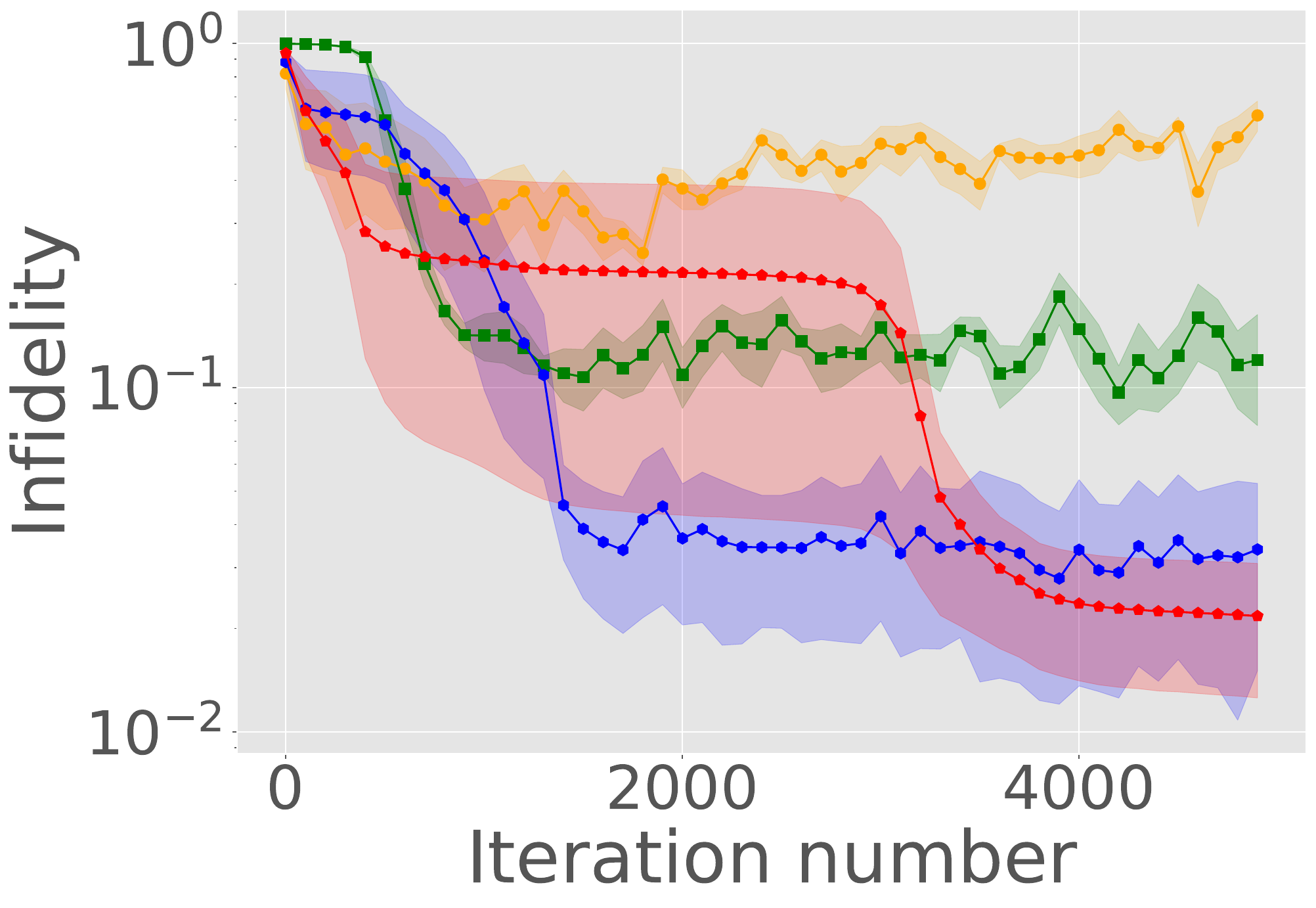} 
         &
        \includegraphics[width=0.5\columnwidth]{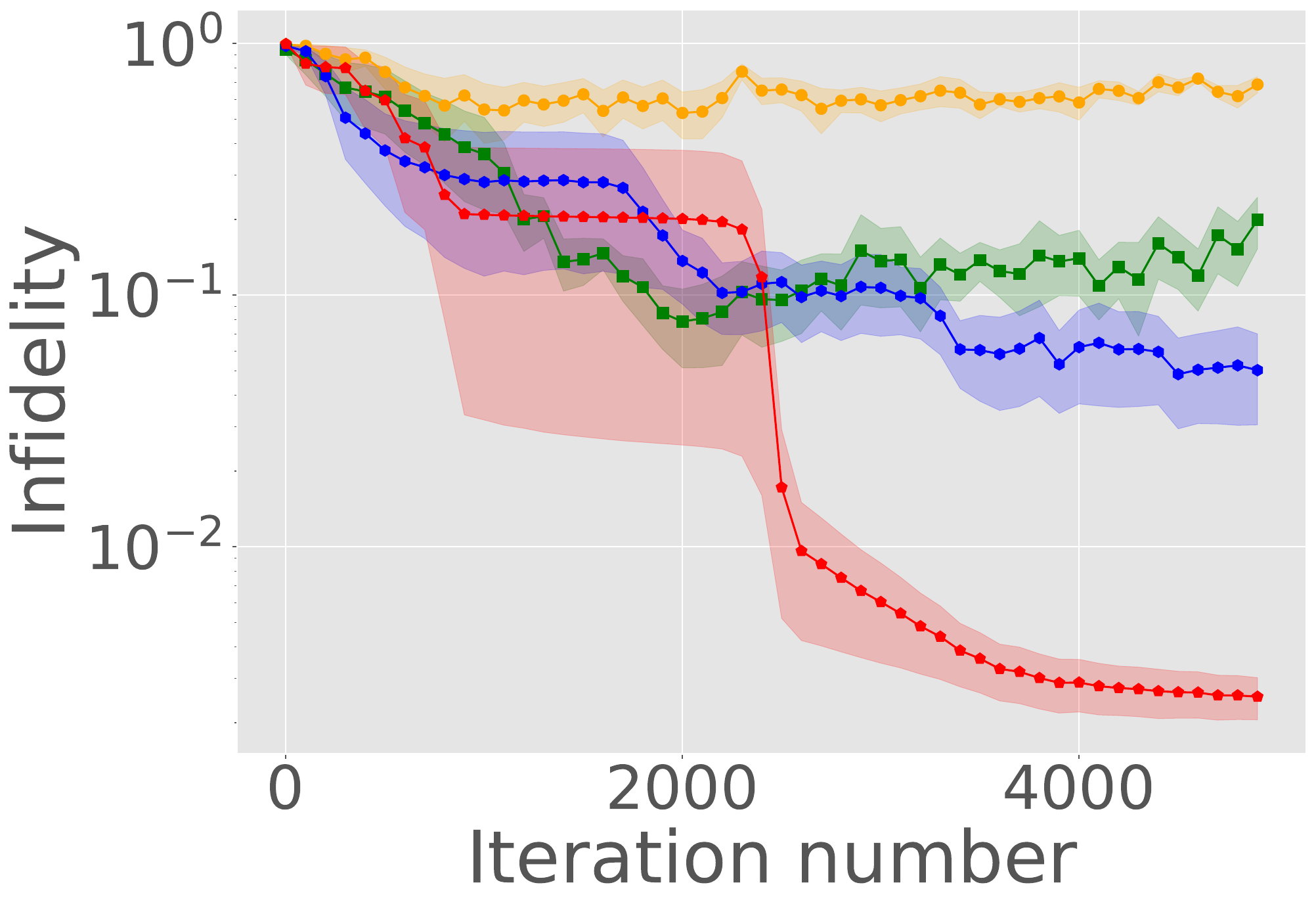}
        &
        \includegraphics[width=0.5\columnwidth]{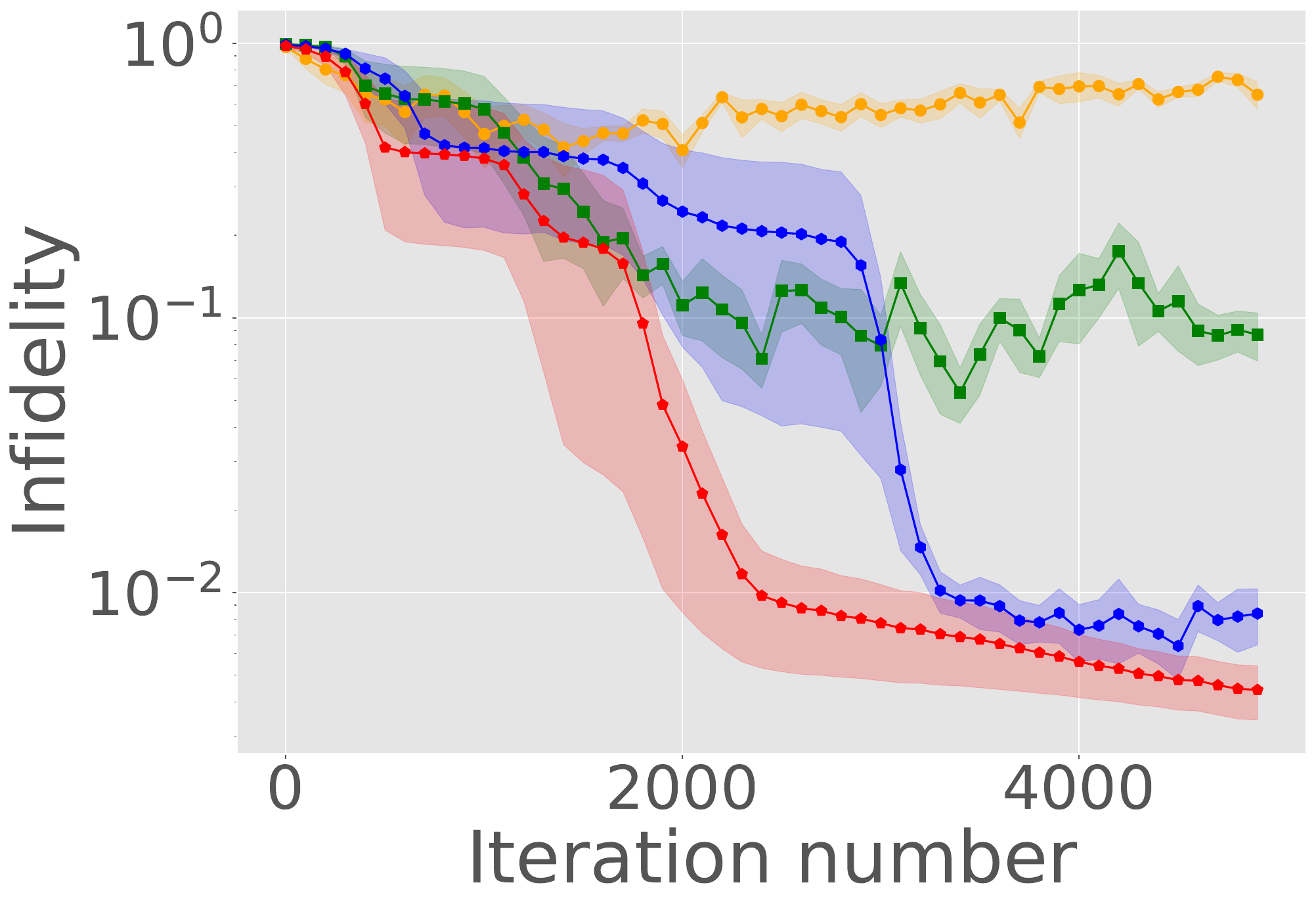}
        &
        \includegraphics[width=0.5\columnwidth]{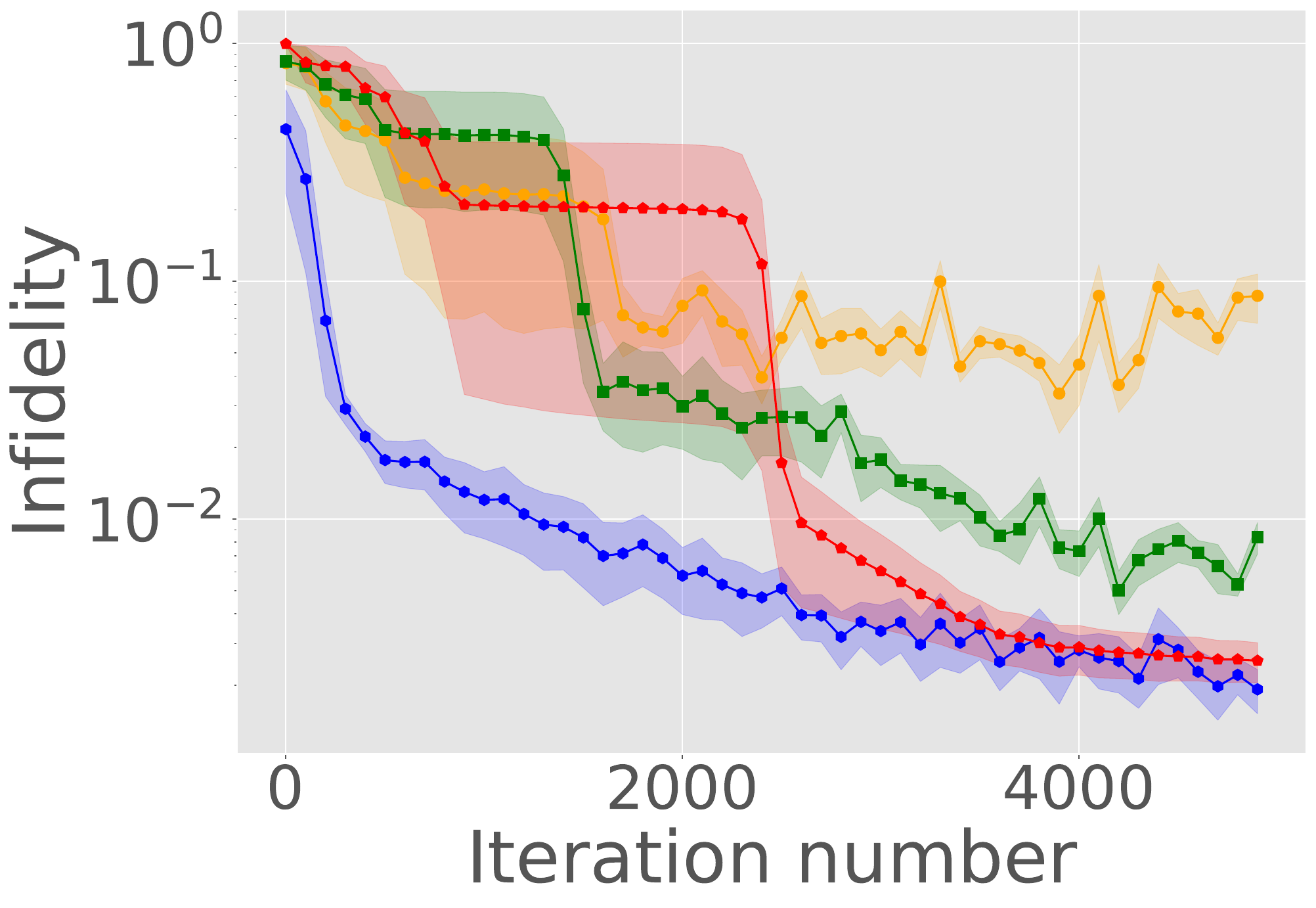} \\
        (a) & (b) & (c) & (d) \\
         \includegraphics[width=0.5\columnwidth]{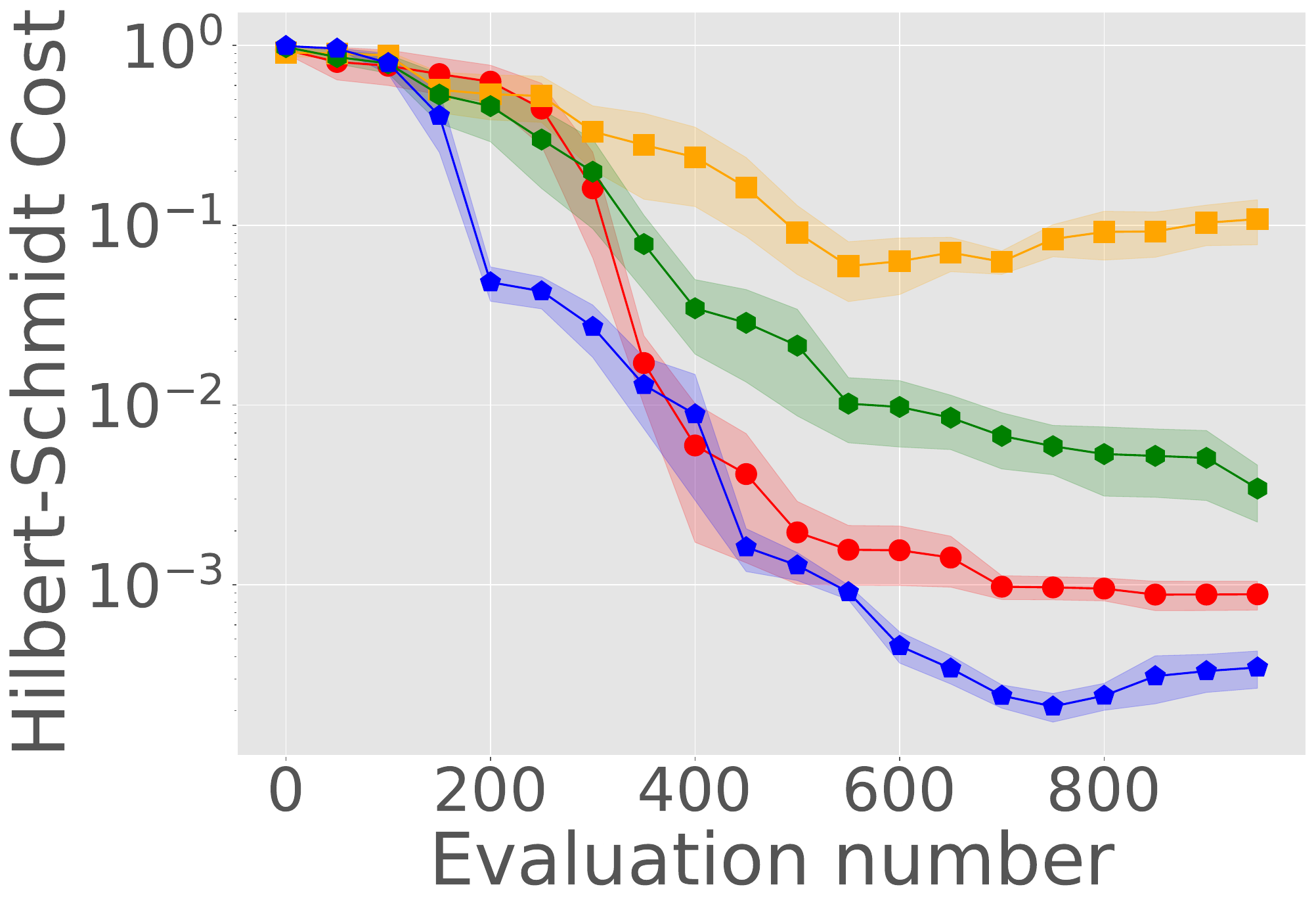} 
         &
        \includegraphics[width=0.5\columnwidth]{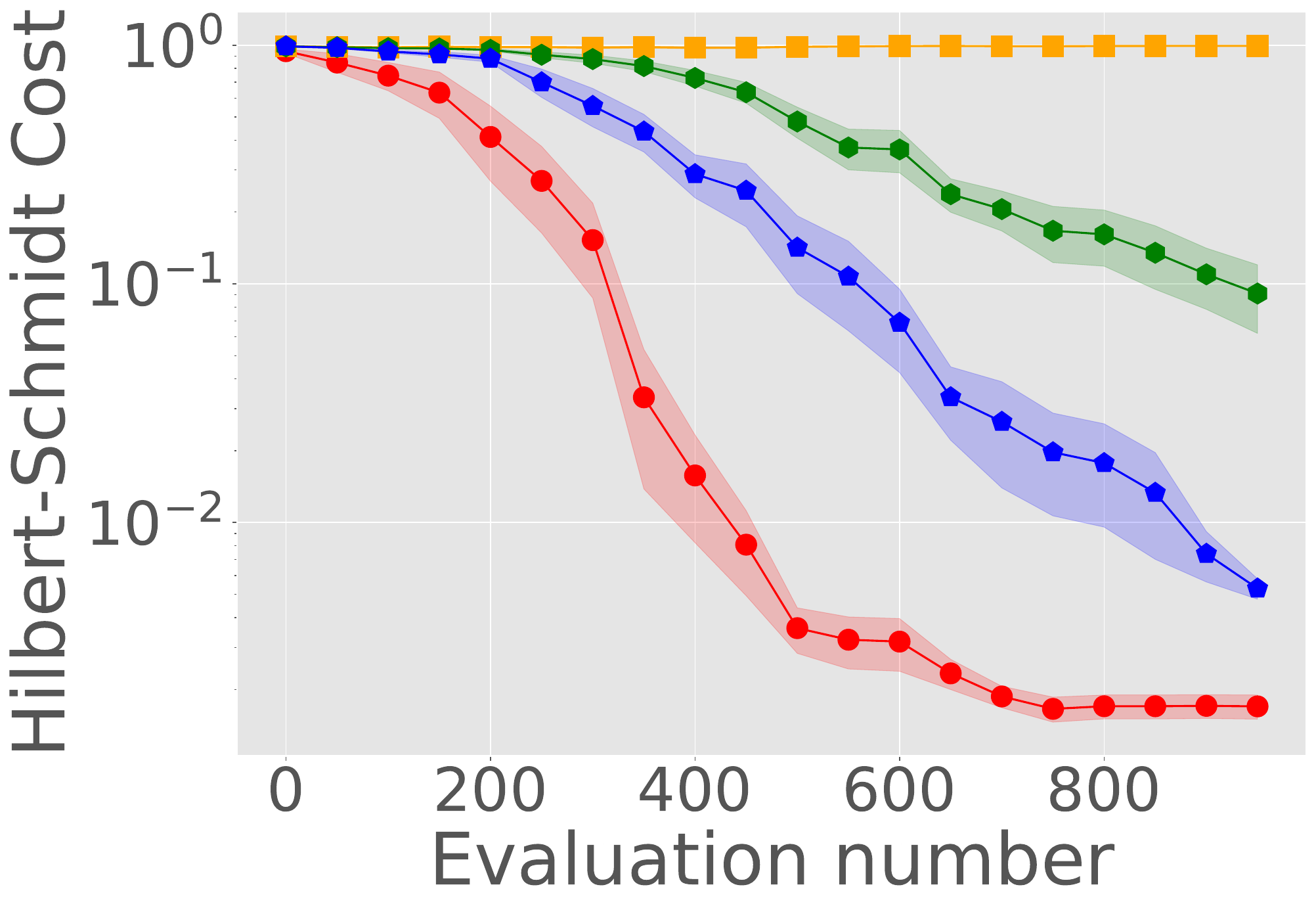}
        &
        \includegraphics[width=0.5\columnwidth]{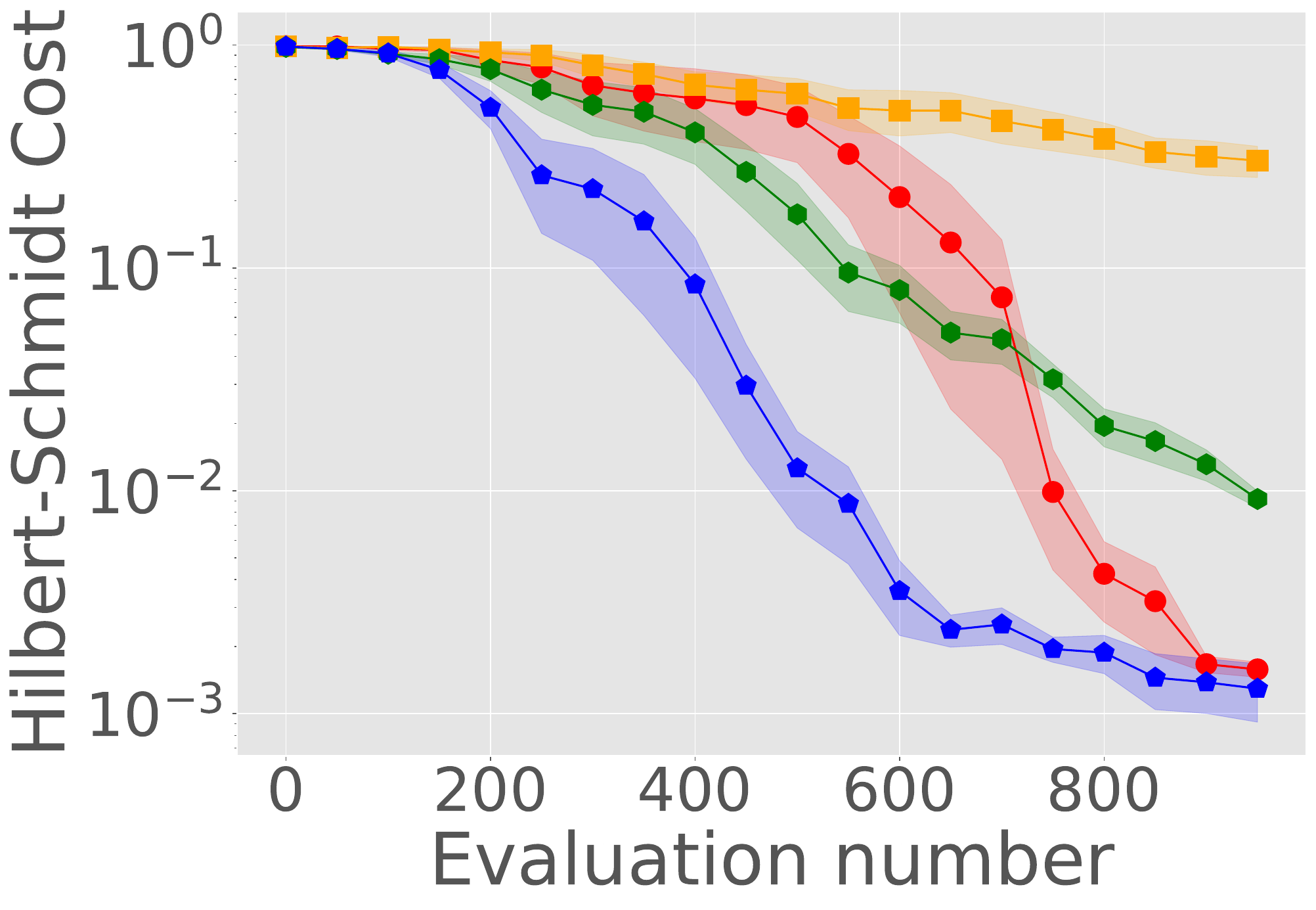}
        &
        \includegraphics[width=0.5\columnwidth]{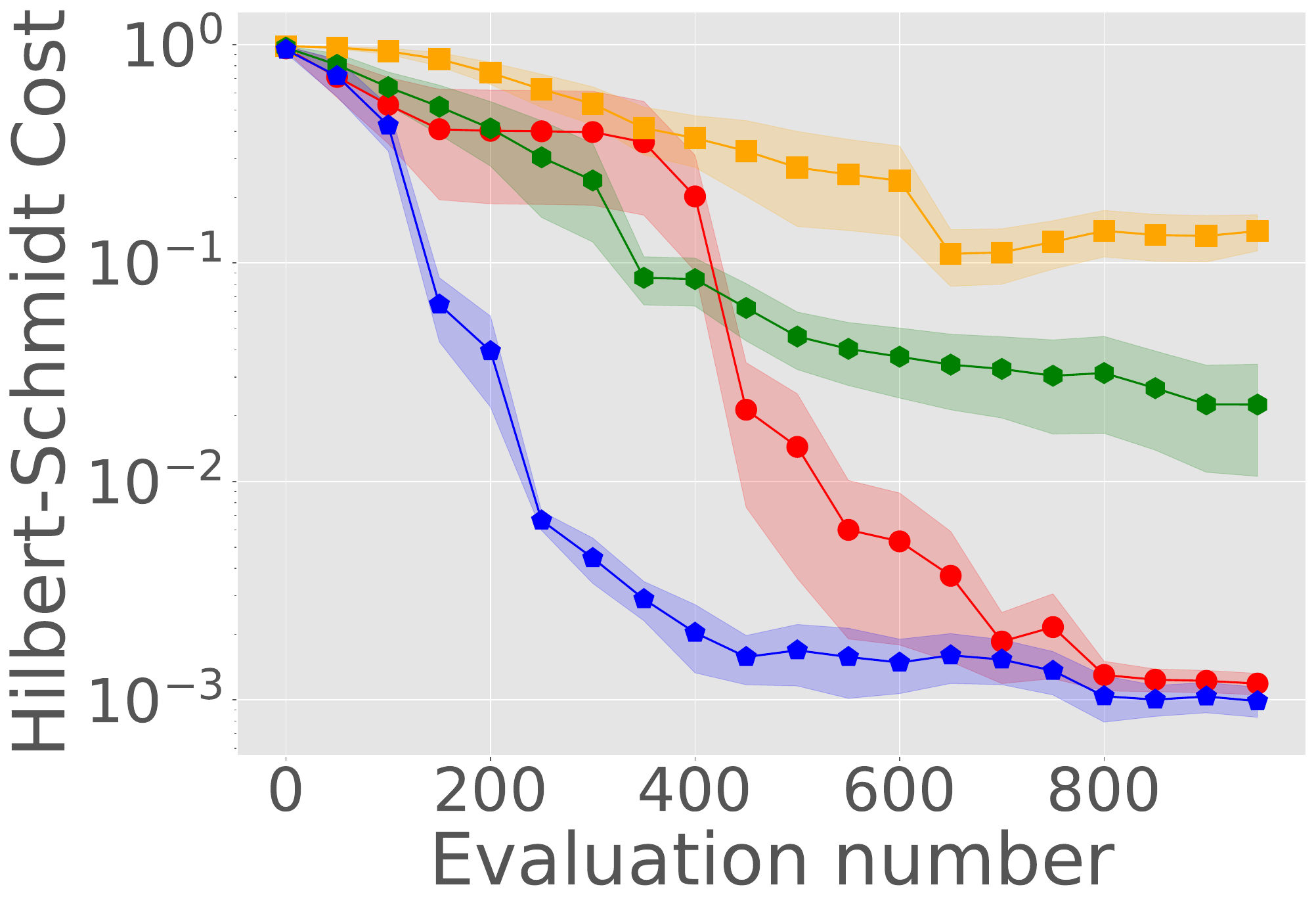} \\
        (e) & (f) & (g) & (h)
        \end{tabular}
        \caption{Simulation results comparing the learning curves of AISO with the standard VQA. Each shaded region corresponds to $5$ instances of a problem. VQA/AISO $(T)$ consumes $T$ copies in total throughout the optimization. Plots (a,e), (b,f), (c,g), (d,h) correspond to ALA, MERA, HEA, and TTN being used as the ansatz, respectively. In plots (a-d), we compare the learning rates of AISO with standard VQA in VQSP. The classical optimizer used is the SPSA algorithm, with $5\times 10^3$ iterations. The red curve represents AISO $(10^4)$ while the orange, green, and blue curves represent VQA $(5 \times 10^5)$, VQA $(10^6)$, and VQA $(2.5 \times 10^6)$ consuming $50, 100, $and $250$ copies per function evaluation respectively. Similarly, in plots (e-h), we compare the learning rates of AISO with standard VQA in VQCS. The classical optimizer used is Powell's method, with a total of $10^3$ function evaluations allowed. In these plots, the minimum Hilbert-Schmidt Cost in each interval of $100$ function evaluations is plotted. Like previously, the red curve represents AISO $(10^4)$ while the orange, green, and blue curves represent VQA $(10^5)$, VQA $(10^6)$, and VQA $(10^7)$ consuming $10, 10^2$, and $10^3$ copies per function evaluation respectively. We can see that AISO can closely match or outperform standard VQA by consuming orders of magnitude fewer copies in total.} \label{fig:fin_vs_AISO}
    \end{figure*}
    
\section{Applications} \label{sec:applications}

    In this section, we discuss how AISO can be used to tackle two important problems in quantum information.

    \subsection{Variational Quantum State Preparation }

    In VQSP, our goal is to find a circuit that is capable of preparing (approximately) a pure state $\rho = \ket{\psi} \bra{\psi}$, given access to multiple copies of it. That is, we would like to find a parameter vector $\boldsymbol{\theta}$ that minimizes the \textit{infidelity} between $U(\boldsymbol{\theta}) ^ {\dag} \ket{0}$ and $\ket{\psi}$, defined as $ 1 - |\bra{\psi} U(\boldsymbol{\theta})^{\dag} \ket{0}|^2$, where $U$ is a heuristically chosen ansatz. Infidelity assumes values in $[0,1]$ and is widely used in quantum information to measure how far apart two states are, $1$ implying orthogonality and $0$ implying equality. Note that the minimization of infidelity is the same as the maximization of $ \langle \ket{0} \bra{0}\rangle_{\rho(\boldsymbol{\theta})}$. Since $\ket{0} \bra{0}$ has unit Frobenius norm, this objective function is compatible with AISO. Also, using AISO, one can attempt to find the best parameters for a wide variety of circuit ansatzes using multiple optimization procedures with very few copies consumed.
    
    Moreover, for any shallow shadow $ \hat{\rho}$, $\langle W_O(\boldsymbol{\theta}) \rangle_{\hat{\rho}}$ can be computed classically efficiently by contracting the tensor network given in Figure~\ref{fig:tensor_networks}(a).  Even though the example given here is the ALA, using Theorem~\ref{th:classical_cost}, one can easily replace it with any ansatz with $R_U \in \mathcal{O}(\log n)$.

    \subsection{Variational Quantum Circuit Synthesis } 
        VQCS is a natural extension of VQSP to quantum circuits. Here, our goal is to learn the parameters of an $n$-qubit ansatz $U(\boldsymbol{\theta})$ that best approximates a given unknown quantum gate $V$. Similar to how we use infidelity for quantum states, we can use the Hilbert-Schmidt cost function defined for unitaries in~\cite{Khatri2019}. For any $\boldsymbol{\theta}$, this is computed as $H(\boldsymbol{\theta})=1 - 1/4^n|\text{tr}(U(\boldsymbol{\theta}) ^ {\dag} V)|^2$ and minimizing $H$ gives us the set of parameters that prepares (approximates) $V$. 
        
        To see why, first note that any quantum gate $W$ can be uniquely identified using a representation given as $W \otimes \overline{W}$. This can be derived from its action on the vectorized version of elements in $\mathcal{L}(\mathbb{C}^{2^n})$. Then we see that $H(\boldsymbol{\theta})$ is proportional to $ \| U(\boldsymbol{\theta}) \otimes \overline{U(\boldsymbol{\theta})} - V \otimes \overline{V}\|_F^2$.

        To evaluate $H(\boldsymbol{\theta})$ for any $ \boldsymbol{\theta}$, we start with the maximally entangled state on two $n$-qubit systems, defined as $\ket{\Phi} = 1/\sqrt{2^n}\sum_{i=0}^{2^n-1} \ket{i} \ket{i}$. Then, we apply $V$ on the second register to obtain $\ket{V} = 1/\sqrt{2^n}\sum_{i=0}^{2^n-1} \ket{i} \ket{v_{\bullet i}}$, where $\ket{v_{\bullet i}}$ is the $i^{\text{th}}$ column of $V$. Then, one can see that $H(\boldsymbol{\theta})=1 -  \langle \ket{U(\boldsymbol{\theta})} \bra{U(\boldsymbol{\theta})} \rangle_{\ket{V} \bra{V}}$. Therefore, we can use shallow shadows of $\ket{V}$ to estimate $H(\boldsymbol{\theta})$. Since $ \| \ket{U(\boldsymbol{\theta})} \bra{U(\boldsymbol{\theta})} \|_F = 1$ for all $\boldsymbol{\theta}$, the number of shadows, or equivalently, the number of applications of $V$, is independent of $n$. 
        
        In terms of classical computational complexity, $ \langle \ket{U(\boldsymbol{\theta})} \bra{U(\boldsymbol{\theta})} \rangle_{\hat{\rho}}$ for any shallow shadow $\hat{\rho}$ can be computed by contracting the tensor network given in Figure~\ref{fig:tensor_networks}(b), the cost of which is polynomial in $n$. The explanation regarding the usage of ALA in this figure is the same as the one for VQSP. From now on, when discussing the sample complexity of VQCS, the ``number of copies" will mean the number of copies of $\ket{V}$ consumed (equivalently, the number of applications of $V$.)
    \label{subsec:csp}

    \begin{figure*}[tbh]
    \centering
        \begin{tabular}{cccc}
        ALA & MERA & HEA & TTN \\
         \includegraphics[width=0.5\columnwidth]{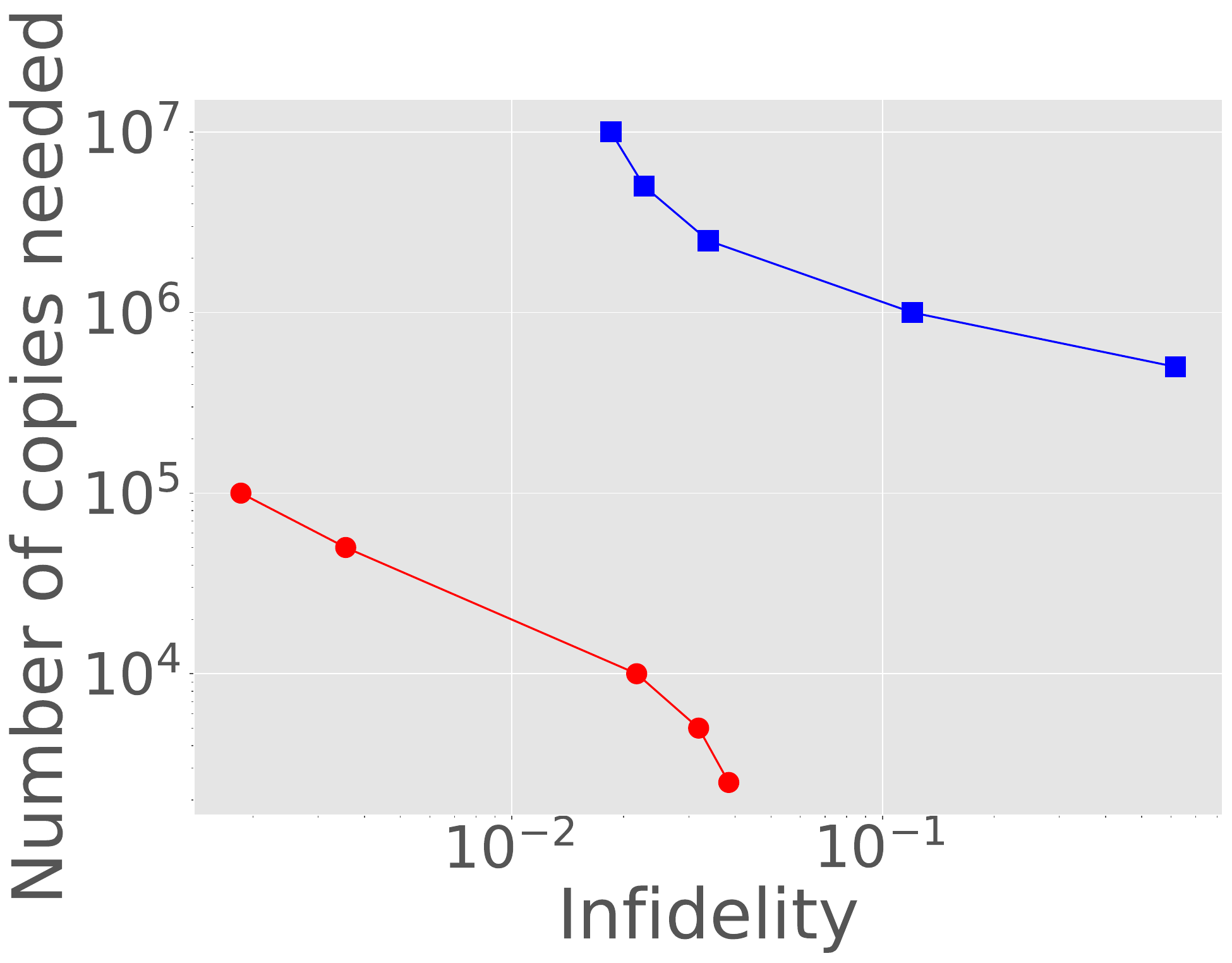} 
         &
        \includegraphics[width=0.5\columnwidth]{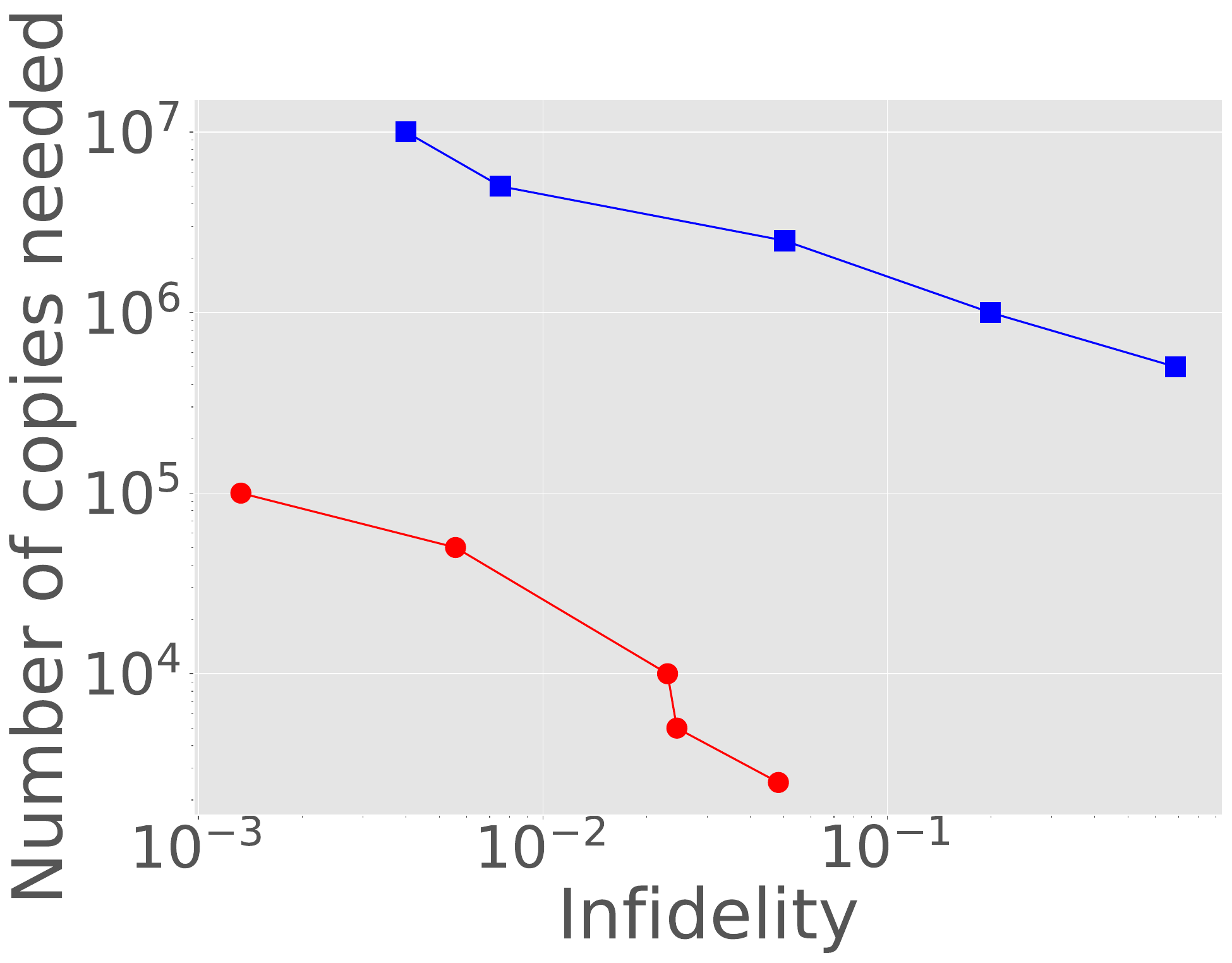}
        &
        \includegraphics[width=0.5\columnwidth]{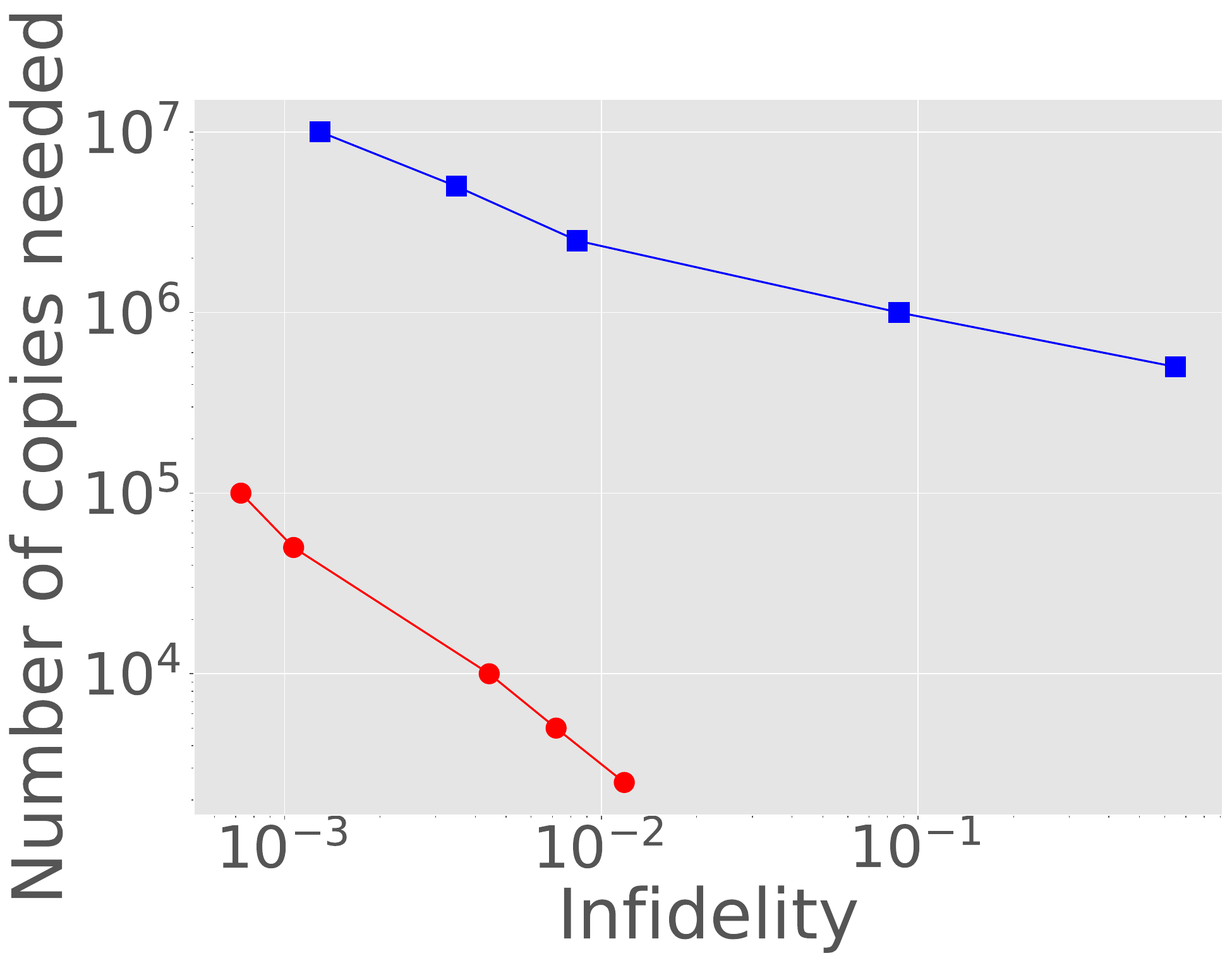}
        &
        \includegraphics[width=0.5\columnwidth]{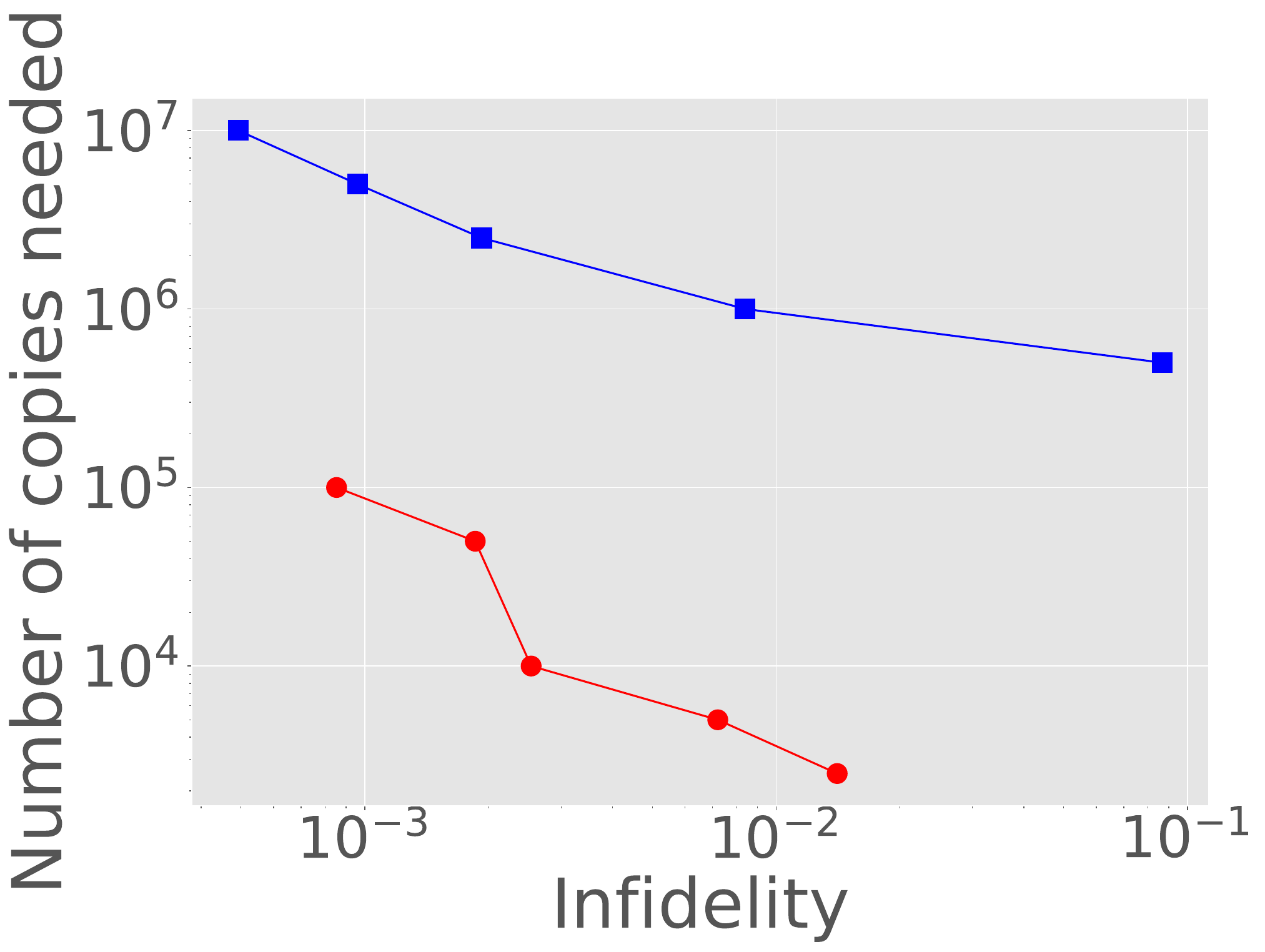} \\
        (a) & (b) & (c) & (d) \\
         \includegraphics[width=0.5\columnwidth]{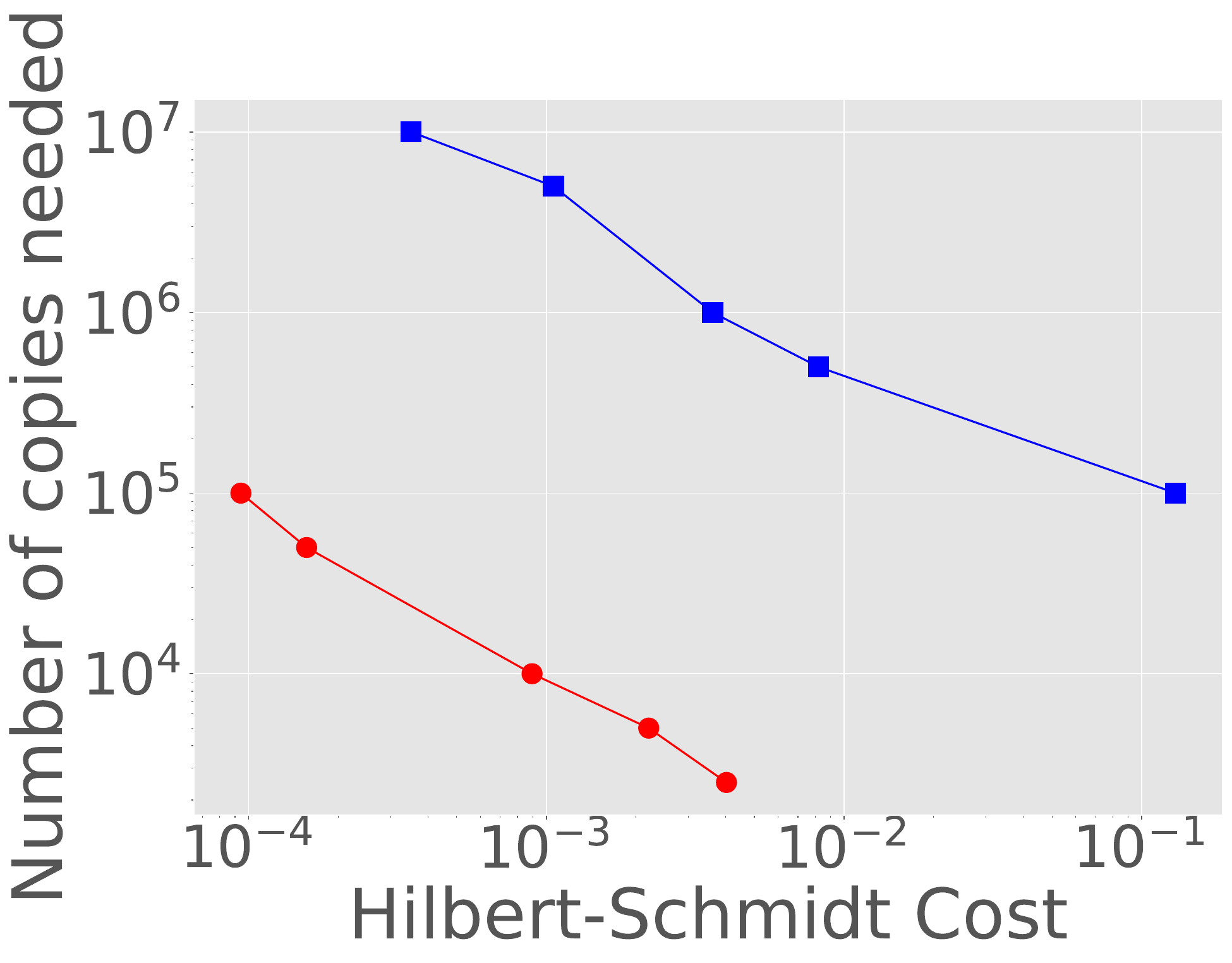} 
         &
        \includegraphics[width=0.5\columnwidth]{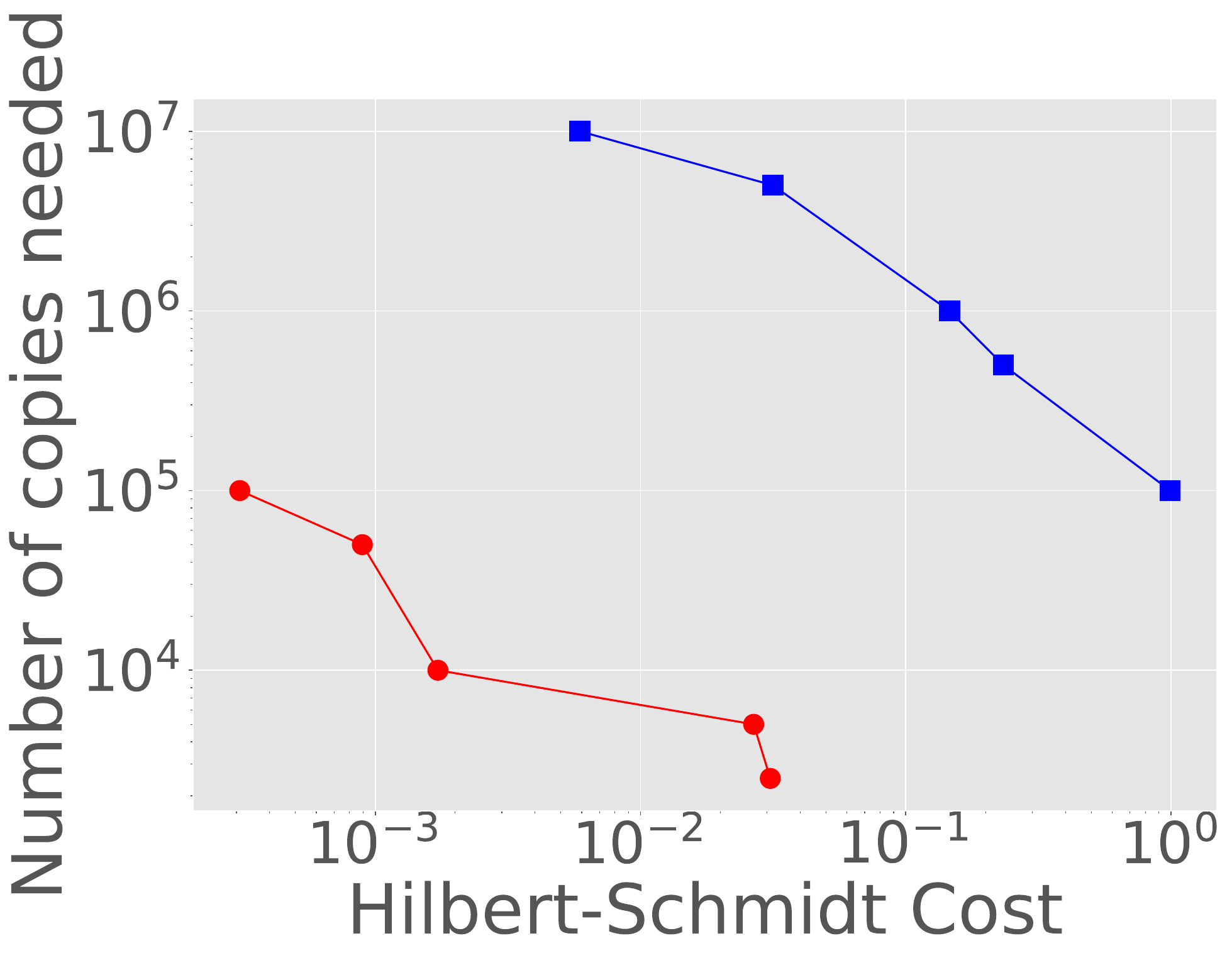}
        &
        \includegraphics[width=0.5\columnwidth]{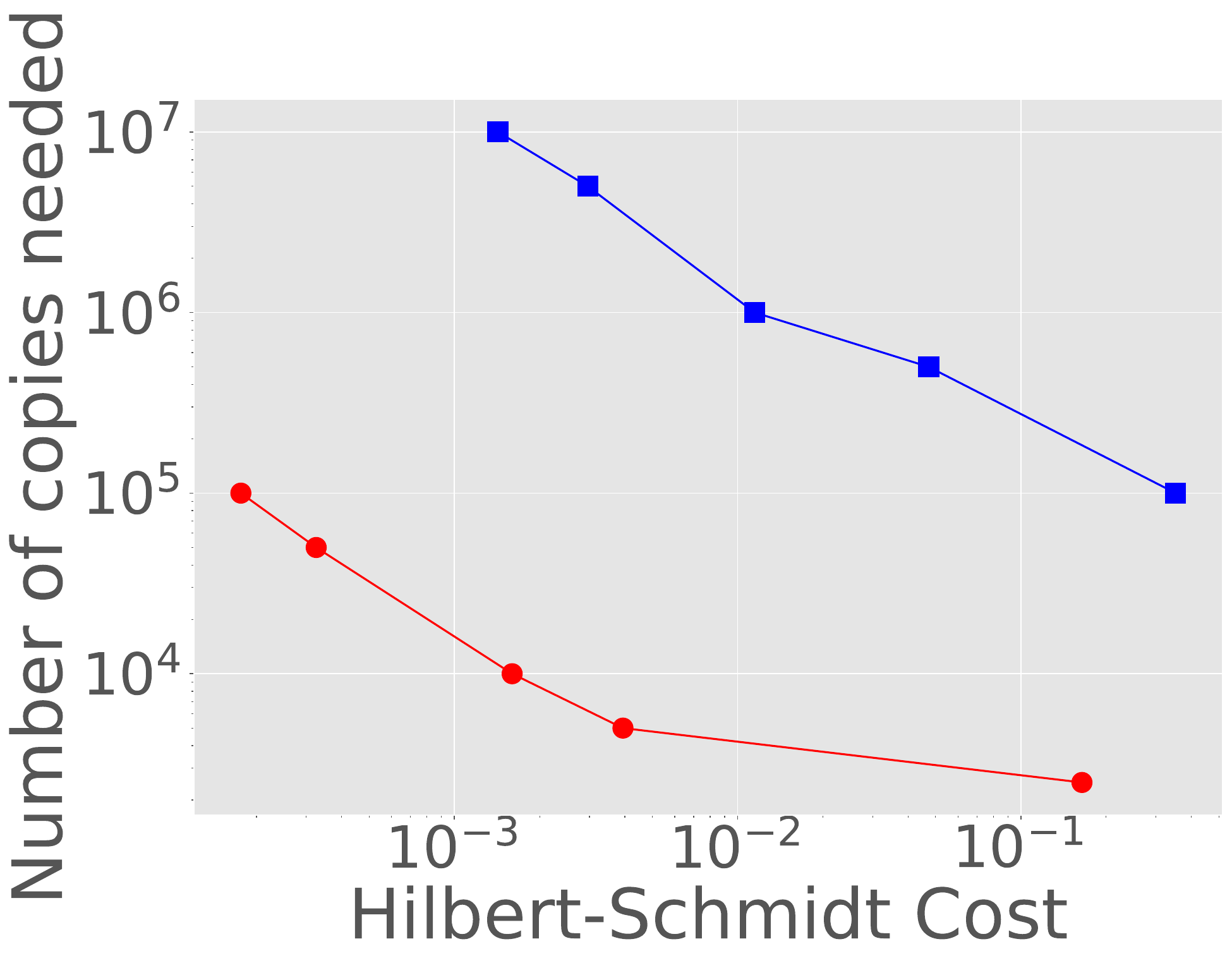}
        &
        \includegraphics[width=0.5\columnwidth]{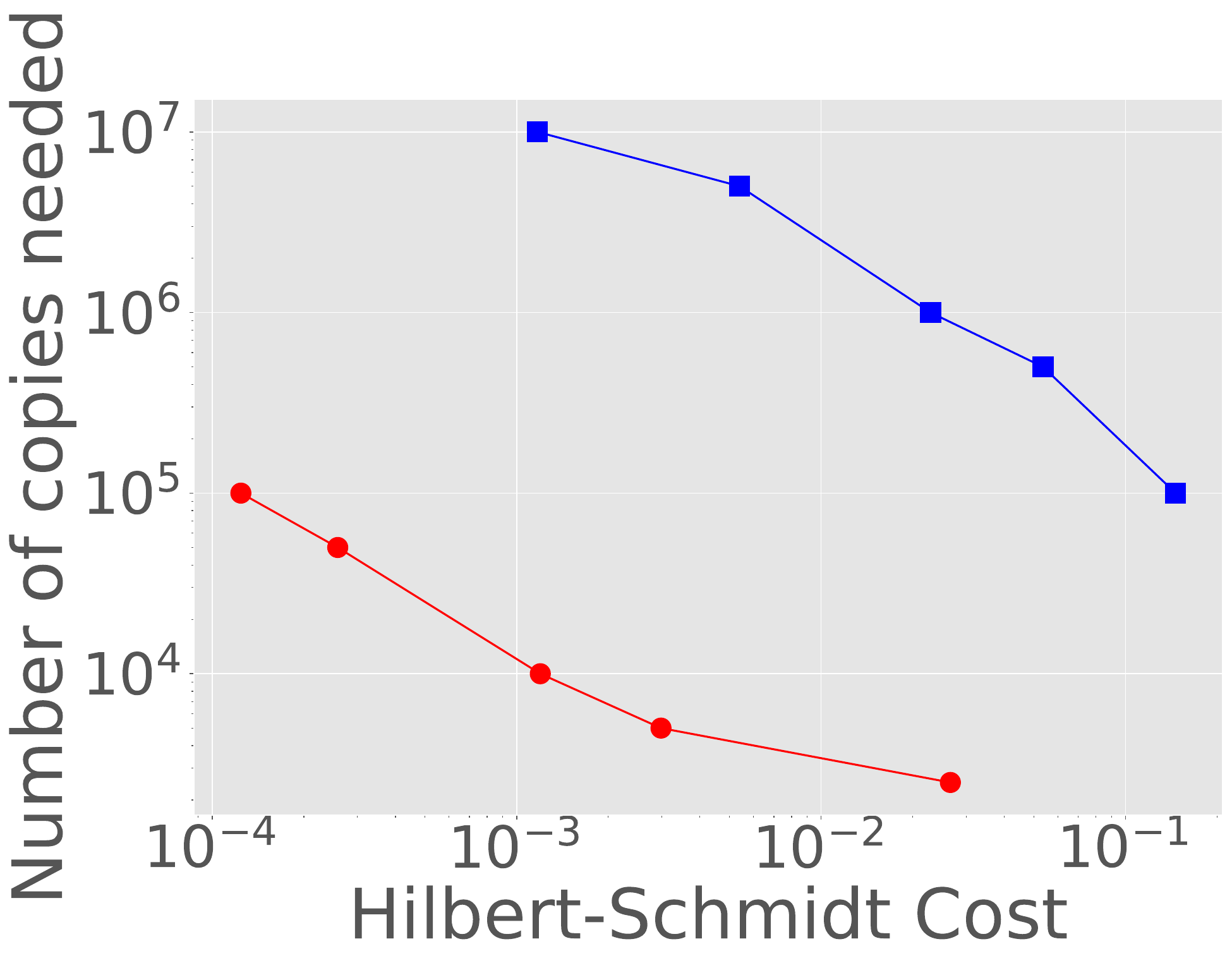} \\
        (e) & (f) & (g) & (h)
        \end{tabular}
        \caption{Resource needs for different infidelity objectives. All points plotted correspond to the mean of $5$ instances of the problem, with x-axis representing average lowest infidelity/Hilbert-Schmidt Cost achieved and y-axis representing the total number of copies consumed to achieve it. The classical optimizers used are the same as Figure~\ref{fig:fin_vs_AISO}. Plots (a,e), (b,f), (c,g), (d,h) correspond to ALA, MERA, HEA and TTN being used as the ansatz respectively. The order of magnitude savings in the number of copies when using AISO is evident. } \label{fig:xaxis}
    \end{figure*}

\section{Simulation Results} \label{sec:simulation_results}

Here we elaborate on the experimental results by comparing the sample complexity of AISO and the standard VQA in the two use cases discussed above. Python code to replicate our experiments can be found in~\cite{github}.

The depth $d$ of the shallow shadow ensemble (cf. Figure~\ref{fig:shadow_circuits}) is set to $3$ throughout the experiments. The viability of AISO in solving both problems is tested across four different ansatzes whose structures are given in Figure~\ref{fig:circuits}(a,b,c,d). Except in HEA, all two-qubit gates can be arbitrary two-qubit subcircuits. The specific ones used in our simulation are given in Figure~\ref{fig:circuits}(e). Also, for VQCS, each two-qubit subcircuit is a combination of two of these. In HEA, the two-qubit gate used is the CNOT gate.

For VQSP, we have used the Simultaneous Perturbation Stochastic Approximation~\cite{Spall1992} (SPSA), where the converging sequences used are, respectively, $c_r=a_r=r^{-0.4}$ and the total number of iterations is $5000$. On the other hand, the results of VQCS have used Powell's method~\cite{Powell1964} with a maximum of $10^3$ function evaluations allowed. We denote by AISO/VQA ($T$) the AISO/VQA algorithm that uses $T$ copies in total. This means that VQA $(T)$ will consume $T/10^4$ copies per function evaluation in SPSA and $ T/10^3$ copies in Powell's method. This is because SPSA requires two function evaluations to produce estimates of the gradient.

The unknown target states considered in the VQSP are $8$-qubit states, which are also compatible with the corresponding ansatzes being used. In each setting, the experiment is carried out across five different states and the results are shown in Figure~\ref{fig:fin_vs_AISO}(a-d). Here, we have plotted the mean of infidelity values achieved at different iterations across the five different experiments that were carried out. The shaded region comprises the mean plus and minus $0.3$ times the standard deviation of the five different infidelities.

In Figure~\ref{fig:fin_vs_AISO}(a-d), VQA $(5 \times 10^5)$, which utlizes $5 \times 10^5$ copies in total, consumes $50$ state copies per function evaluation. Similarly, the other VQA algorithms consume $100$ and $250$ state copies per evaluation. One can see that AISO closely matches or outperforms the results of VQA by consuming only $10^4$ copies in total.

Moving on to VQCS, similar experiments are carried out for $4$-qubit quantum gates (meaning $8$-qubits used in total). The results are summarized in Figure~\ref{fig:fin_vs_AISO}(e-h). Here, the minimum $H(\boldsymbol{\theta})$ in each interval of $10^2$ function evaluations out of the total allowed $10^3$ is plotted. The three VQA algorithms used here consume $ 10^2, 10^3$ and $ 10^4$ copies per function evaluation respectively. It is clear from the plots that AISO can match the performance of standard VQA similarly using considerably fewer copies to what we saw in the case of VQSP. 

In Figure~\ref{fig:xaxis}, we present the superiority of AISO over VQA in a different light. On the x-axis, we plot different infidelity or Hilbert-Schmidt cost values, and on the y-axis, we plot the number of copies required to achieve them, which are exponentially better for AISO.

    \begin{figure}[t] 
    \centering
    \begin{tabular}{cc}
         \includegraphics[width=0.35\columnwidth]{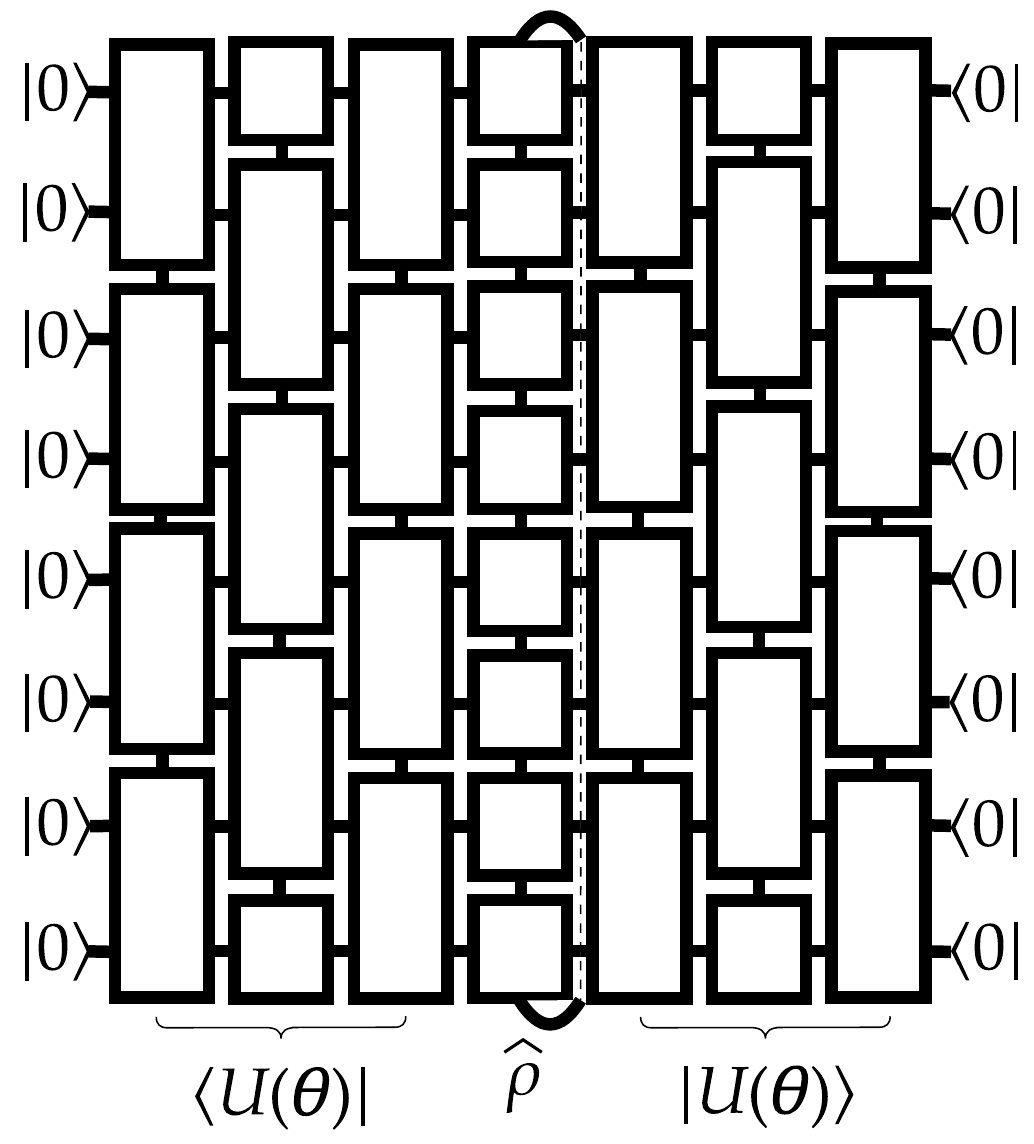} 
         &
        \includegraphics[width=0.33\columnwidth]{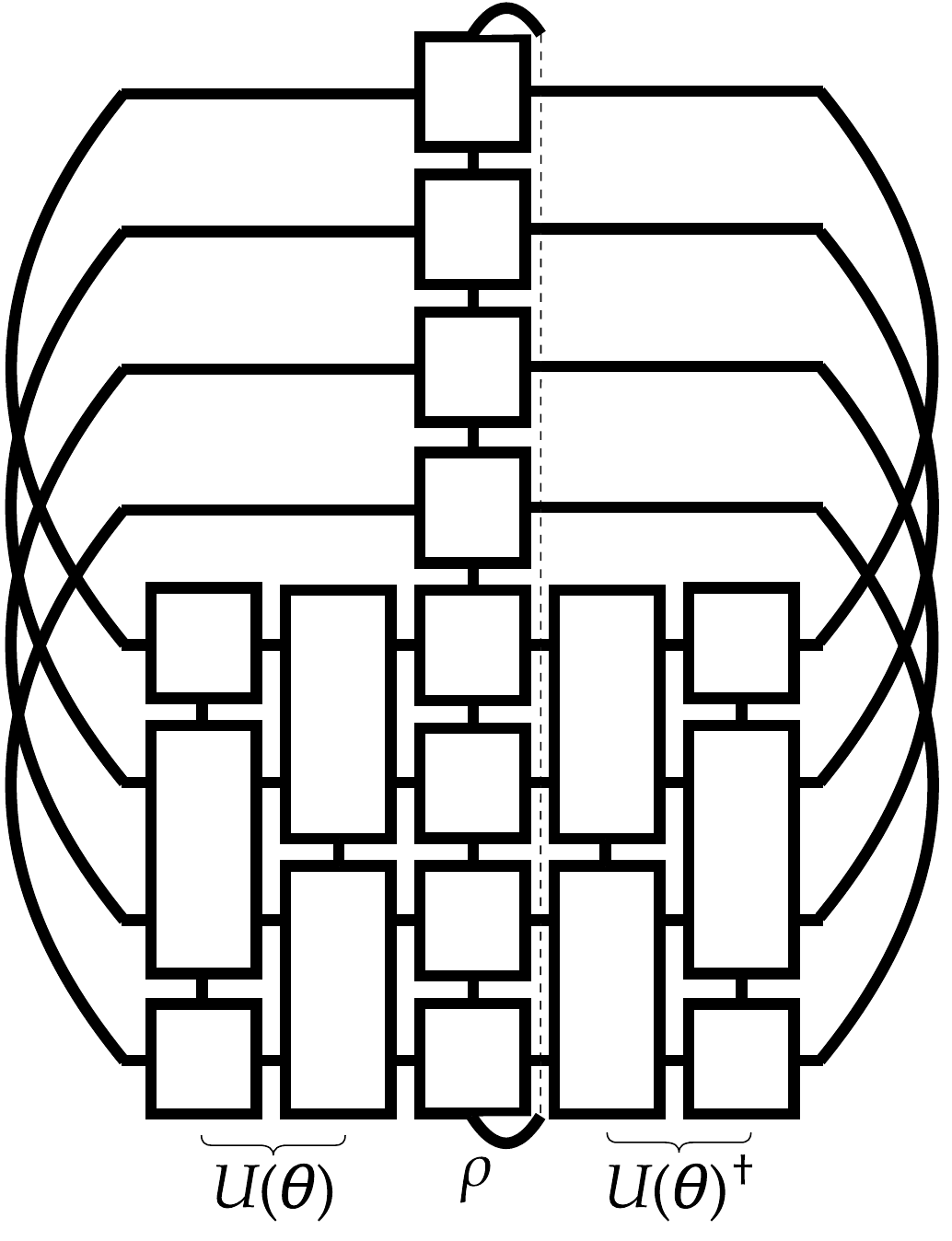} \\
        (a) State preparation & (b) Circuit Synthesis 
    \end{tabular}
    \caption{Tensor networks to compute $\langle W_O(\boldsymbol{\theta})\rangle_{\hat{\rho}}$. The examples used here uses the ALA. (a) corresponds to VQSP while (b) corresponds to VQCS. To contract (a) efficiently, we can start from the top qubit wire and contract wire by wire. One can see that, at every step, the total number of free indices the tensor will have is $\mathcal{O}(\log n)$, thus the cost of contraction is $\mathcal{O}(\text{poly}(n))$. Note that this is true for any ansatz with $R_U \in \mathcal{O}(\log n)$. A similar argument can be made for (b) when we start contracting ring by ring from the top.} \label{fig:tensor_networks}
    \end{figure} 
\section{Improved Bounds Using 2-Design Assumption} \label{sec:impact_2_design}

    In this section, we analyze the assumption of the input state in more detail. The assumption that the state is sampled from a $1$-design merely says that the input state is the maximally mixed state. So, to further understand the notion of a ``typical input state" and to get closer to the notion of the input state being an average state or a randomly generated state, we make a stronger assumption on the distribution. More precisely, we assume that the input state is sampled from a \textit{state $2$-design} $\mathcal{D}_2$. These are ensembles such that sampling from them is equivalent to sampling a pure state uniformly up to two statistical moments. $2$-designs are extensively used in quantum information to generate pseudorandomness and to analyze average case complexities~\cite{Dankert2009,Scott2008,Ambainis2009}.

    In this regime, we derive two results, starting with an upper bound on the variance of the state-dependent shadow norm when the state is sampled from a state $2$-design.

    \begin{theorem} \label{th:var}
        Let $\mathcal{D}_2$ be a state $2$-design and $d = \Theta(\log n)$. Then, for any observable $O$, we have

        \begin{align}
            \text{Var}_{\sigma \sim \mathcal{D}_2} \left( \| O \|_{\sigma,\mathcal{U}_d}^2 \right) \leq 64 \| O\|_F^2.
        \end{align}
    \end{theorem}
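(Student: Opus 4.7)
The key observation is that the state-dependent shadow norm is linear in $\sigma$: directly from its definition,
\[
\|O\|_{\sigma,\mathcal{U}_d}^{2} = \tr(\sigma M_O),\qquad
M_O := \mathbb{E}_{U\sim\mathcal{U}_d}\sum_{u}\langle O\rangle_{\hat\rho_{U,u}}^{2}\,U^{\dag}\ket{u}\bra{u}U \succeq 0 .
\]
So $\text{Var}_{\sigma\sim\mathcal{D}_2}(\|O\|_{\sigma,\mathcal{U}_d}^{2})$ is nothing but the variance of a linear functional of $\sigma$ under the 2-design. First I would plug in the standard 2-design integral $\mathbb{E}_{\sigma\sim\mathcal{D}_2}[\sigma^{\otimes 2}]=(\mathds{1}+S)/[2^n(2^n+1)]$ (with $S$ the swap on two $n$-qubit registers) to obtain
\[
\mathbb{E}_\sigma\!\left[(\|O\|_{\sigma,\mathcal{U}_d}^{2})^2\right] \;=\; \frac{\tr(M_O)^{2} + \tr(M_O^{2})}{2^n(2^n+1)},
\]
and combine it with $\mathbb{E}_\sigma[\|O\|_{\sigma,\mathcal{U}_d}^{2}]=\tr(M_O)/2^n$ (from $\mathcal{D}_2$ being in particular a 1-design) to obtain, after simplification, the clean identity
\[
\text{Var}_{\sigma\sim\mathcal{D}_2}\!\left(\|O\|_{\sigma,\mathcal{U}_d}^{2}\right) \;=\; \frac{\tr(M_O^{2}) - \tr(M_O)^{2}/2^n}{2^n(2^n+1)} \;=\; \frac{\|M_0\|_F^{2}}{2^n(2^n+1)},
\]
where $M_0 := M_O - (\tr(M_O)/2^n)\,\mathds{1}$ is the traceless part of $M_O$. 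The task therefore reduces to the Frobenius bound $\|M_0\|_F^{2} \leq 64\cdot 2^n(2^n+1)\,\|O\|_F^{2}$.

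For the second step I would rewrite
\[
\tr(M_O^{2}) \;=\; \mathbb{E}_U\sum_u \langle O\rangle_{\hat\rho_{U,u}}^{2}\,\|O\|_{U^{\dag}\ket{u}\bra{u}U,\mathcal{U}_d}^{2},
\]
using $\tr(U^{\dag}\ket{u}\bra{u}U\cdot M_O)=\|O\|_{U^{\dag}\ket{u}\bra{u}U,\mathcal{U}_d}^{2}$, and then exploit that $\mathcal{U}_d$ is locally scrambled, so that the reconstruction channel $\Delta_{\mathcal{U}_d}$ is diagonal in the Pauli basis with eigenvalues $\{\lambda_P\}_P$. Expanding $M_O$ in that basis, each coefficient $\tr(M_OP)/2^n$ becomes a Haar-type expectation that, under the locally scrambled averaging, collapses into an expression in the Pauli coefficients $\tr(OP)$ and the eigenvalues $\lambda_P$. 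Summing the squared Pauli coefficients of $M_0$ and applying the same depth-$d=\Theta(\log n)$ control on the $\lambda_P$ that powers Theorem~\ref{th:shallow_shadows} (which bounds $\|O\|_{\mathds{1}/2^n,\mathcal{U}_d}^{2}$ by $4\|O\|_F^{2}$) should yield the stated linear-in-$\|O\|_F^{2}$ bound with constant $64$.

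The main obstacle is this last calculation. Any naive estimate — Cauchy–Schwarz in the form $\tr(M_O^{2})\leq \mathbb{E}_U\sum_u \langle O\rangle_{\hat\rho_{U,u}}^{4}$, or bounding $\|O\|_{U^{\dag}\ket{u}\bra{u}U,\mathcal{U}_d}^{2}$ by the worst-case pure-state shadow norm and then invoking $\tr(M_O)\leq 2^{n+2}\|O\|_F^{2}$ — only yields $\|M_0\|_F^{2}=O(4^n\|O\|_F^{4})$, which is too weak by a full factor of $\|O\|_F^{2}$. Recovering the advertised $\|O\|_F^{2}$ dependence requires tracking the cancellation between $\tr(M_O^{2})$ and $\tr(M_O)^{2}/2^n$: in the Pauli expansion the leading $O(4^n\|O\|_F^{4})$ contribution comes from the identity component of $M_O$ and is subtracted off exactly, leaving only the subleading $O(4^n\|O\|_F^{2})$ remainder. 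Executing this cancellation — in effect, showing that the non-identity Pauli coefficients of $M_O$ are individually small — is what the technical body of the proof must accomplish.
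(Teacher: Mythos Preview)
Your setup is cleaner than the paper's: you recognize that $\|O\|_{\sigma,\mathcal{U}_d}^{2}=\tr(\sigma M_O)$ is linear in $\sigma$ and compute the variance exactly via the state $2$-design second moment, arriving at $\text{Var}=\|M_0\|_F^{2}/[2^n(2^n+1)]$. The paper instead bounds the variance by the raw second moment $\mathbb{E}_{\rho}[\|O\|_{\rho,\mathcal{U}_d}^{4}]$, writes $\rho=W\ket{0}\bra{0}W^{\dag}$ with $W$ drawn from a unitary $2$-design, expands the square, and applies a corollary of the standard $2$-design trace identity to bound each Haar term by $4/4^n$; what remains factors exactly as $(4/4^n)\,\tr(M_O)^{2}$, and Theorem~\ref{th:shallow_shadows} gives $\tr(M_O)=2^n\|O\|_{\mathds1/2^n,\mathcal{U}_d}^{2}\leq 2^{n+2}\|O\|_F^{2}$, hence $64\|O\|_F^{4}$. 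Your exact-variance route with the ``naive'' estimate $\tr(M_O^{2})\leq \tr(M_O)^{2}$ (valid since $M_O\succeq 0$) already yields $\text{Var}\leq \tr(M_O)^{2}/4^n\leq 16\|O\|_F^{4}$, tighter by a factor of $4$.

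The difficulty you describe in the last paragraph is not a gap in your argument but a consequence of a typo in the displayed statement: the bound should be $64\|O\|_F^{4}$, not $64\|O\|_F^{2}$. The appendix restatement of the theorem has the exponent $4$, and Theorem~\ref{th:AISO_2design} uses it in the form $\sqrt{\text{Var}}\leq 8\|\widetilde{O}\|_F^{2}$. More decisively, an inequality $\text{Var}_{\sigma}(\|O\|_{\sigma,\mathcal{U}_d}^{2})\leq C\|O\|_F^{2}$ cannot hold for all $O$: under $O\mapsto\lambda O$ the left side scales as $\lambda^{4}$ and the right as $\lambda^{2}$. So the ``cancellation'' you are hunting for, to upgrade $O(\|O\|_F^{4})$ to $O(\|O\|_F^{2})$, does not exist; your naive estimate is already the proof.
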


    Using this result, we can derive a result similar to Theorem~\ref{th:AISO}, with better constants.

        \begin{theorem} \label{th:AISO_2design}
            Let $d = \Theta(\log n)$ and $\rho$ be an $n$-qubit pure state sampled from a state 2-design $\mathcal{D}_2$. For any $\delta, \epsilon \in (0,1)$, $m > 1/\sqrt{\delta}$, and any $C>0$, let
            \begin{align}\label{eq:T_12design}
            T_1 \geq 2 \log \left(\frac{2(m^2-1)C}{m^2\delta-1}\right),\ T_2 \geq \frac{136}{\epsilon^2} (2m + 1)\| O\|_F^2.
            \end{align} 
            Then for any parameter vectors $\boldsymbol{\theta} ^ {(1)}, \boldsymbol{\theta} ^ {(2)}, \dots, \boldsymbol{\theta} ^ {(C)}$, all values $\langle W_O(\boldsymbol{\theta} ^ {(c)}) \rangle_{\rho}$, $1\leq c\leq C$, defined as in Eq~\eqref{eq:costfun} can be estimated using $ \langle \widehat{W_O}(\boldsymbol{\theta} ^ {(c)}) \rangle_{\rho}$ defined as in Eq~\eqref{eq:tfun} so that with probability at least $1-\delta$, we have $| \langle W_O(\boldsymbol{\theta} ^ {(c)})\rangle_{\rho} - \langle \widehat{W_O}(\boldsymbol{\theta} ^ {(c)}) \rangle_{\rho} | \leq \epsilon$
            for all $c$.
        \end{theorem}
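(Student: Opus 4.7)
The plan is to follow the proof of Theorem \ref{th:AISO} nearly verbatim, replacing its one application of Markov's inequality (used to bound the random state-dependent shadow norm $X_c := \|\widetilde{W_O(\boldsymbol{\theta}^{(c)})}\|_{\rho,\mathcal{U}_d}^2$ for $\rho \sim \mathcal{D}_1$) by Chebyshev's inequality, whose second-moment input is supplied by Theorem \ref{th:var}. Because Chebyshev gives a quadratic tail in place of Markov's linear tail, this single upgrade is exactly what converts the scaling of $T_2$ from $m\|O\|_F^2$ to $(2m+1)\|O\|_F^2$, and the domain constraint from $m > 1/\delta$ to $m > 1/\sqrt{\delta}$.

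Concretely, for each $c$ I view $X_c$ as a function of the random $\rho \sim \mathcal{D}_2$. Since every 2-design is also a 1-design and unitary conjugation preserves the Frobenius norm (so $\|\widetilde{W_O(\boldsymbol{\theta}^{(c)})}\|_F \leq \|O\|_F$), Theorem \ref{th:shallow_shadows} yields the mean bound $\mathbb{E}_\rho X_c \leq 4\|O\|_F^2$, while Theorem \ref{th:var} controls $\mathrm{Var}_\rho(X_c)$ by a quantity scaling with $\|O\|_F^4$. A direct Chebyshev estimate (at a deviation proportional to $m\|O\|_F^2$) then gives
\begin{align*}
    P_\rho\!\left(X_c > 4(2m+1)\|O\|_F^2\right) \leq \frac{1}{m^2},
\end{align*}
which is precisely the tail that matches the stated $T_2 = (136/\epsilon^2)(2m+1)\|O\|_F^2$ via the threshold $T_2 \geq (34/\epsilon^2)\cdot 4(2m+1)\|O\|_F^2$ required to apply Theorem \ref{th:state_shadow_theorem}. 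Taking a union bound over $c$, the \emph{bad state} event $E = \{\max_c X_c > 4(2m+1)\|O\|_F^2\}$ occurs with probability at most $C/m^2$.

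To finish, I would condition on $E^c$ and apply Theorem \ref{th:state_shadow_theorem} with $M = C$ observables to bound the simultaneous estimation failure probability over the shadow measurements by $2Ce^{-T_1/2}$. Adding the two contributions gives an overall failure probability of at most $2Ce^{-T_1/2} + C/m^2$; requiring this to be $\leq \delta$ while keeping the constraint $m^2\delta > 1$ (i.e., $m > 1/\sqrt{\delta}$, which is what keeps the logarithm argument positive) produces the stated lower bound on $T_1$. The main obstacle I anticipate is precisely the last step: the algebraic bookkeeping needed to recover the exact form $2\log(2(m^2-1)C/(m^2\delta - 1))$ rather than a slightly looser variant such as $2\log(2Cm^2/(m^2\delta - C))$. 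This mirrors the analogous split in the proof of Theorem \ref{th:AISO}, where the $(m-1)$ factor (respectively $(m^2-1)$ here) emerges from absorbing part of the failure budget into a prefactor; once the Chebyshev replacement is in place, the remaining manipulation carries through just as in the 1-design case, with the improved power of $m$ tracking the improved power in Chebyshev versus Markov.
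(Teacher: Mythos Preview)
Your overall strategy---replace Markov by Chebyshev, feed in the variance bound of Theorem~\ref{th:var}, then invoke Theorem~\ref{th:state_shadow_theorem}---is exactly the paper's. The discrepancy you anticipate in the final algebra comes from two deviations from what the paper actually does.

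First, the paper does \emph{not} take a union bound over $c$ for the shadow-norm event. It applies Chebyshev once (to a single observable with Frobenius norm $\|O\|_F$) and obtains $\Pr[\|\widetilde{O}\|_{\rho,\mathcal{U}_d}^2 \leq 4(2m+1)\|\widetilde{O}\|_F^2] \geq 1 - 1/m^2$, so the bad-state probability is $1/m^2$, not $C/m^2$. Your union bound would force $m>\sqrt{C/\delta}$ rather than the stated $m>1/\sqrt{\delta}$, so this step genuinely breaks the theorem as written. (Whether the paper's single application is fully rigorous for all $C$ parameterised observables is a separate issue; the point here is that this is how the paper argues, mirroring its proof of Theorem~\ref{th:AISO}.)

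Second, the paper does not add the two failure probabilities. It conditions and multiplies: with probability at least $(1-\delta')(1-1/m^2)$ both the shadow-norm bound and the Theorem~\ref{th:state_shadow_theorem} guarantee hold. Setting $1-\delta = (1-\delta')(1-1/m^2)$ and solving gives $\delta' = (m^2\delta-1)/(m^2-1)$, whence $T_1 = 2\log(2C/\delta') = 2\log\!\big(2(m^2-1)C/(m^2\delta-1)\big)$ on the nose. If you drop your union bound over $c$ and switch to this multiplicative split, the exact constants fall out immediately with no further bookkeeping.
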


        Hence, we see that the lower bound on $T_1$ in Eq~\eqref{eq:T_12design} is a constant time better than the lower bound on $T_1$ in Eq~\eqref{eq:T_11design}. By replacing the function evaluations in Theorem~\ref{th:AISO_2design} with expectations with arbitrary observables, one can see that similar advantages can be gained for regular shallow shadow estimation also when the input is sampled from a $2$-design.

\section{Dealing With Barren Plateaus} \label{sec:barren_plateaus}

    In some cases, the usage of global observables has been shown to introduce barren plateaus into the training landscape~\cite{Cerezo2021,Liu2022}. These are regions with gradients exponentially small in the number of qubits, which makes evaluating them using quantum devices extremely difficult. Several heuristic approaches have been proposed, which have been experimentally shown to be effective in certain cases. Even though AISO uses global observables, we note that our method is compatible with almost all barren plateau mitigating heuristic methods that have been proposed in the literature. For example,~\cite{Patti2021,Mele2022,Rad2022,Skolik2021,Grimsley2023,Grimsley2023_adapt,Friedrich2022,Verdon2019,Grant2019,Kulshrestha2022,Zhang2022} are methods that ultimately use the quantum device only to estimate $ \langle W_O(\boldsymbol{\theta}) \rangle_{\rho}$ at certain carefully chosen inputs $\boldsymbol{\theta}$. So, it is clear that if we use shadows to estimate them, then exponential advantages similar to the ones discussed in this paper can be achieved.
    
\section{Conclusion and Future Direction} \label{sec:conclusion}
    In this work, we proposed AISO --- a training algorithm that leverages shallow shadows to achieve an exponential reduction in quantum resources required to train VQA cost functions involving almost any shallow ansatz and observables with low Frobenius norm. This allows one to do more iterations of the classical optimizer, more hyperparameter tuning, and experiment with ansatzes and optimizers with very few executions of the quantum device. We demonstrate this advantage in two important use cases of interest in quantum information: Variational Quantum State Preparation and Variational Quantum Circuit Synthesis. 

    In terms of future directions, we are trying to design similar resource-efficient ansatz agnostic protocols for local observables, by leveraging classical machine learning with classical shadows similar to~\cite{Huang2021}. We are also investigating the potential of generative machine learning-based tomography models such as~\cite{Ma2023,Cha2022,Zhong2022,Schmale2022} to improve the sample complexity of VQAs.

\section{Acknowledgement}
We thank Afham for pointing out important references and related works. This work is partially supported by the Australian Research Council (Grant No: DP220102059). AB was partially supported by the Sydney Quantum Academy PhD scholarship. 
\bibliography{references}

\section{Technical Appendix} \label{sec:appendix}
Here, we present the proofs of Theorems~\ref{th:classical_cost}, ~\ref{th:AISO},~\ref{th:var} and ~\ref{th:AISO_2design}. 

        \begin{customthm}{3}
            In AISO, for any quantum ansatz $U$ with $R_U \in \mathcal{O}(\log n)$,  $\langle \widehat{W_O}(\boldsymbol{\theta}) \rangle_{\rho}$ can be classically evaluated with cost $\mathcal{O}(\text{poly}(n) \cdot \log C \cdot \| O\|_F^2) $ for VQSP and VQCS. 
        \end{customthm}
    \begin{proof}
    The proof is based on the proof of  classical simulation of quantum circuits in~\cite{Jozsa2006}. We recall the main result of that paper here for brevity.

    \begin{customthm}{A1}
        \cite{Jozsa2006} Every quantum circuit $U$ can be classically simulated using tensor networks using cost $\mathcal{O}(n\cdot\text{poly}(2 ^ {R_U}))$.
    \end{customthm}
    
    The reasoning is that when we start contracting the tensor network from the top qubit-wire, the maximum number of free indices the tensor can have at any point will be $\mathcal{O}(R_U)$. This means that the size of the tensor at every point will be $\mathcal{O}(2^{R_U})$. So, such contractions for all $n$ qubit-wires can be done using cost $\mathcal{O}(n\cdot\text{poly}(2 ^ {R_U}))$.

    For VQSP of an unkwown state $\rho$, the required classical computation is mainly computing $ \langle W_{\ket{0} \bra{0}}(\boldsymbol{\theta}) \rangle_{\hat{\rho}}$ for some shallow shadow $\hat{\rho}$. An example of this can be seen in Figure~\ref{fig:tensor_networks}(a). But one can see that this is almost the same as a quantum circuit tensor network. The only difference is the fact that the core tensors of $\hat{\rho}$ may not be valid quantum operations. But that does not affect the complexity of tensor contraction. Since the shadows are generated using an ensemble with $d=\mathcal{O}(\log n)$, the core tensors will have dimension $\mathcal{O}(\text{poly}(n))$. Combining this with the fact that there are $\mathcal{O}(\log C \cdot \| O\|_F^2)$ shadows, the complexity of classical computation is $ \mathcal{O}(\text{poly}(n) \cdot \log C \cdot \| O\|_F^2)$.
    
    \begin{figure}[t] 
    \centering
    \begin{tabular}{cc}
         \includegraphics[width=0.4\columnwidth]{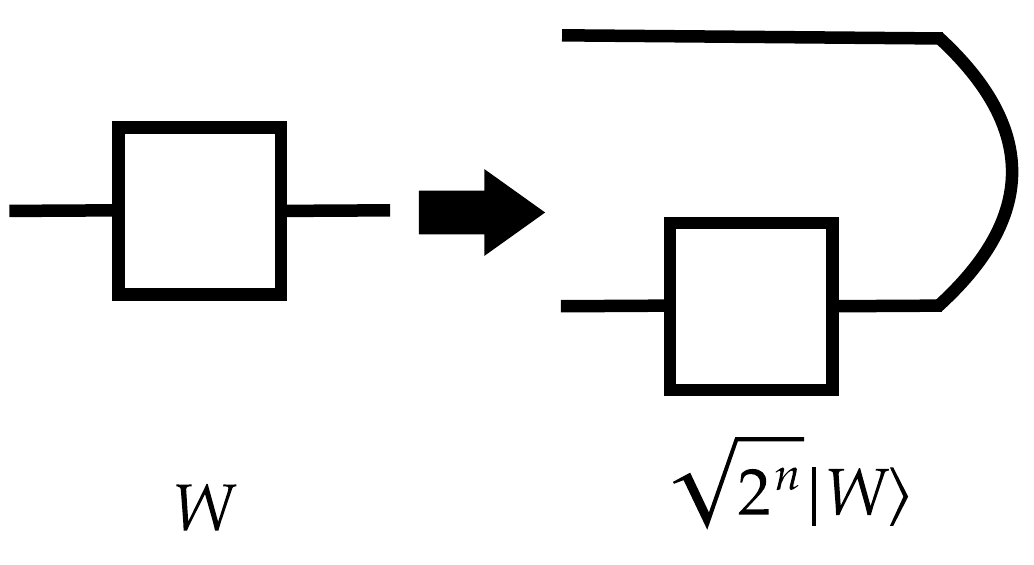} 
    \end{tabular}
    \caption{Vectorization of a unitary $W$.
    } \label{fig:vectorization}
    \end{figure}
    
    For VQCS of an unknown circuit $V$, the expectation that we are estimating is given as 

    \begin{align}
        \frac{\left| \text{tr}(U(\boldsymbol{\theta}) ^ {\dag} V) \right| ^ 2}{4^n} &= \left| \braket{U(\boldsymbol{\theta}) | V} \right|^2 \\
        &= \bra{U(\boldsymbol{\theta})} \ket{V} \bra{V} \ket{U(\boldsymbol{\theta})}.
    \end{align}
    
    So for any shadow $\hat{\rho}$ of the state $\ket{V}$, we have 
    \begin{align}
        \langle \ket{U(\boldsymbol{\theta})} \bra{U(\boldsymbol{\theta})} \rangle_{\hat{\rho}} = 
        \bra{U(\boldsymbol{\theta})} \hat{\rho} \ket{U(\boldsymbol{\theta})}.
    \end{align}

    Note that for any unitary matrix $W$, $\ket{W}$ is simply the vectorized and normalized version of $W$. If any unitary is depicted as a tensor block, we can always get a vectorized version by bending the output wire as shown in Figure~\ref{fig:vectorization}. So, given the circuit description of $U(\boldsymbol{\theta})$, we can get a tensor network depicting $\sqrt{2^n}\ket{U(\boldsymbol{\theta})}$ by bending the output wires in the manner shown in Figure~\ref{fig:tensor_networks}(b). Similar to how Figure~\ref{fig:tensor_networks}(a) can be contracted efficiently if we start from the top and go in a line-by-line manner, this network can also be contracted efficiently if we start from the top and contract ring by ring. That is, contract tensors along the top ring, then contract tensors along the second ring, and so on. Finally, we divide the answer by $2^n$.  Similar to the previous case, we see that at every instance, the number of free indices will be $\mathcal{O}(\log n)$, and hence the complexity of contracting this tensor is $\mathcal{O}(\text{poly}(n))$. Hence, since we have $\mathcal{O}(\log C \cdot \| O\|_F^2)$ shadows, the complexity of a single function evaluation is $ \mathcal{O}(\log C \cdot \| O\|_F^2 \cdot \text{poly}(n))$.

    It is easy to see that one can generalize this to arbitrary observables that can be represented as a quantum circuit-like tensor network with $R_O \in \mathcal{O}(\log n)$.

    \end{proof}
        \begin{customthm}{4} 
                        Let $d = \Theta(\log n)$ and $\rho$ be an $n$-qubit pure state sampled from a state 1-design $\mathcal{D}_1$. For any $\delta, \epsilon \in (0,1)$, $m > 1/\delta$, and any $C>0$, let
            \begin{align}
            T_1 \geq 2 \log \left(\frac{2(m-1)C}{m\delta-1}\right),\ T_2 \geq \frac{136}{\epsilon^2} m \| O\|_F^2.
            \end{align} 
            Then for any parameter vectors $\boldsymbol{\theta} ^ {(1)}, \boldsymbol{\theta} ^ {(2)}, \dots, \boldsymbol{\theta} ^ {(C)}$, all values $\langle W_O(\boldsymbol{\theta} ^ {(c)}) \rangle_{\rho}$, $1\leq c\leq C$, defined as in Eq~\eqref{eq:costfun} can be estimated using $ \langle \widehat{W_O}(\boldsymbol{\theta} ^ {(c)}) \rangle_{\rho}$ defined as in Eq~\eqref{eq:tfun} so that with probability at least $1-\delta$, we have $| \langle W_O(\boldsymbol{\theta} ^ {(c)})\rangle_{\rho} - \langle \widehat{W_O}(\boldsymbol{\theta} ^ {(c)}) \rangle_{\rho} | \leq \epsilon$
            for all $c$.

        \end{customthm}

    \begin{proof}
        As we established earlier, all $C$ function evaluations are expectations of $\rho$ with $C$ parameterized observables $W_O(\boldsymbol{\theta}^{(c)})$, each with $ \| W_O(\boldsymbol{\theta}^{(c)})\|_F = \| O\|_F$.
        
        From Theorem~2, we know that 
        \begin{align}
            \| \widetilde{O} \|_{\mathds1/2^n,\mathcal{U}_d}^2 &\leq 4 \| \widetilde{O} \|_F^2 \\
            \implies \mathbb{E}_{\rho \sim \mathcal{D}_1 }\|\widetilde{O} \|_{\rho, \mathcal{U}_d}^2 &\leq 4 \| \widetilde{O} \|_F^2
        \end{align}
        So by using Markov Inequality, we have
        \begin{align}
            &\text{Prob} \left[ \|\widetilde{O} \|_{\rho, \mathcal{U}_d}^2 \leq m \mathbb{E}_{\rho \sim \mathcal{D}_1 }\|\widetilde{O} \|_{\rho, \mathcal{U}_d}^2 \right] \geq 1 - 1/m \\
            \implies & \text{Prob} \left[ \|\widetilde{O} \|_{\rho, \mathcal{U}_d}^2 \leq 4m \|\widetilde{O} \|_F^2 \right] \geq 1 - 1/m
        \end{align}
        
        So with probability at least $1-1/m$, the state-dependent shadow norm is bounded by $4m\| \widetilde{O}\|_F^2$. So, for any $\delta', \epsilon \in (0,1)$, if we use $T_1 T_2$ shallow shadows, where $T_1=2\log(2C/\delta')$, $T_2 = (136m/\epsilon^2) \| \widetilde{O_i}\|_F^2$, with probability at least $ \left( 1 - \delta' \right)
        \left( 1 - 1/m \right)$, for all $c$, we will have $| \langle W_O(\boldsymbol{\theta})\rangle_{\rho} - \langle \widehat{W_O}(\boldsymbol{\theta})\rangle_{\rho}| \leq \epsilon$.

        Set $1 - \delta = (1 - \delta')(1-1/m)$. So we have

        \begin{align}
            &1 - \delta = (1 - \delta')(1-1/m) \\
            \implies &\delta' = 1 - \left(\frac{1-\delta}{1-1/m} \right) \\
            &\ \ \ = 1 + \frac{m(\delta - 1)}{m - 1} \\
            &\ \ \ = \frac{m\delta-1}{m-1}
        \end{align}
        This completes the proof.
    \end{proof}

    \begin{customthm}{5} 
        Let $\mathcal{D}_2$ be a state $2$-design and $d = \Theta(\log n)$. Then for any observable $O$, we have

        \begin{align}
            \text{Var}_{\sigma \sim \mathcal{D}_2} \left( \| O \|_{\sigma,\mathcal{U}_d}^2 \right) \leq 64 \| O\|_F^4.
        \end{align}
    \end{customthm}

    \begin{proof}
        Let $\mathcal{U}_d = \{ U_1, U_2, \dots, U_{| \mathcal{U}_d|} \} $. Similar to state $2$-designs, special ensembles called \textit{unitary $2$-designs} can be used to emulate sampling a unitary uniformly at random (according to the Haar measure) upto two statistical moments. Given any unitary $2$-design $\mathcal{W}$,  $\{ W\ket{0} \ | \ \forall \ W \in \mathcal{W}\}$ is a state $2$-design~\cite{Watrous2018}. Using this, we have 
        \begin{align*}
            &\text{Var}_{\rho \sim \mathcal{D}_2} \left( \| O\|_{\rho, \mathcal{U}_d}^2\right) \leq \mathbb{E}_{\rho \sim \mathcal{D}_2} \left( \| O\|_{\rho, \mathcal{U}_d} ^ 4 \right) \numberthis \\
            &= \mathbb{E}_{W \sim \mathcal{W}}  \left( \| O\|_{W\ket{0} \bra{0} W^{\dag}, \mathcal{U}_d} ^ 4 \right) \numberthis \\
            &= \mathbb{E}_{W \sim \mathcal{W}} \left[ \mathbb{E}_{U \sim \mathcal{U}_d} \sum \limits_{u=0 }^{2^n-1} \bra{u} U W\ket{0} \bra{0} W^{\dag} U^{\dag} \ket{u} \langle O\rangle_{\hat{\rho}_{U,u}} ^ 2 \right]^2 \numberthis \\
            &= \mathbb{E}_{W \sim \mathcal{W}} \mathbb{E}_{U_1 \sim \mathcal{U}_d} \mathbb{E}_{U_2 \sim \mathcal{U}_d} \sum \limits_{u_1, u_2=0}^{2^n-1} \bra{u_1} U_1 W \ket{0}
            \\
            &\bra{0} W^{\dag} U_1 ^ {\dag} \ket{u_1} \langle O\rangle_{\hat{\rho}_{U_1,u_1}} ^ 2
            \bra{u_2} U_2 W \ket{0} \bra{0} W^{\dag} U_2 ^ {\dag} \ket{u_2} \langle O\rangle_{\hat{\rho}_{U_2,u_2}} ^ 2 \numberthis \\
            &= \mathbb{E}_{U_1 \sim \mathcal{U}_d} \mathbb{E}_{U_2 \sim \mathcal{U}_d} \sum \limits_{u_1, u_2=0}^{2^n-1} \langle O\rangle_{\hat{\rho}_{U_1,u_1}} ^ 2 \langle O\rangle_{\hat{\rho}_{U_2,u_2}} ^ 2 \mathbb{E}_{W \sim \mathcal{W}} \\
            &\hspace*{1.5cm}\text{tr} \left[ W\ket{0} \bra{0} W^{\dag} U_1^{\dag} \ket{u_1} \bra{u_1} U_1 \right] \\
            &\hspace*{1.5cm}\text{tr} \left[ W\ket{0} \bra{0} W^{\dag} U_2^{\dag} \ket{u_2} \bra{u_2} U_2 \right] \numberthis
        \end{align*}

        Now, we recall Lemma 3 from~\cite{Cerezo2021}, which will be useful when taking expectations with respect to unitary $2$-deisgns.

        \begin{lemma}
            Let $\mathcal{W}$ be a unitary $2$-design of operators acting on $ \mathbb{C} ^ {2 ^ n}$ and let $A,B,C,D \in \mathcal{L}(\mathbb{C} ^ {2 ^ n})$ be arbitrary matrices. Then we have

            \begin{align*}
                &\mathbb{E}_{W \sim \mathcal{W}} \text{tr} \left[ WA W^{\dag} B \right] \text{tr} \left[ WC W^{\dag} D \right] \\
                &= \frac{1}{4 ^ n - 1} \left[ \text{tr} (A) \text{tr} (B) \text{tr} (C) \text{tr} (D) + \text{tr}(AC) \text{tr}(BD)\right] \\
                &-\frac{1}{2^n(4^n - 1)} \left[ \text{tr} (AC) \text{tr} (B) \text{tr} (D) + \text{tr} (A) \text{tr} (C) \text{tr} (BD)\right] \numberthis
            \end{align*}
        \end{lemma}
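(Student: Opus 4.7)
The plan is to reduce the unitary $2$-design average to a Haar integral and then invoke the standard twirling formula from Schur--Weyl duality.

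First, I would rewrite the product of traces as a single trace on a tensor-product space, using $\text{tr}(X)\text{tr}(Y) = \text{tr}(X \otimes Y)$ together with $(WAW^\dag) \otimes (WCW^\dag) = (W \otimes W)(A \otimes C)(W^\dag \otimes W^\dag)$. This gives
\[ \text{tr}(WAW^\dag B)\,\text{tr}(WCW^\dag D) = \text{tr}\bigl[(B \otimes D)(W \otimes W)(A \otimes C)(W^\dag \otimes W^\dag)\bigr]. \]
Since the integrand is a degree-$(2,2)$ polynomial in the entries of $W$ and $W^\dag$, the unitary $2$-design property lets me replace $\mathbb{E}_{W \sim \mathcal{W}}$ by the Haar average over $\mathbb{U}(2^n)$.

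Next, I would compute the twirl $T(X) := \int (W \otimes W) X (W^\dag \otimes W^\dag)\,dW$ via Schur--Weyl duality. Because $T(X)$ commutes with every $U \otimes U$, it lies in the commutant of the diagonal unitary action on $\mathbb{C}^{2^n} \otimes \mathbb{C}^{2^n}$; this commutant is two-dimensional and spanned by the identity $I_{4^n}$ and the swap operator $F$. Hence $T(A \otimes C) = \alpha I + \beta F$ for some scalars $\alpha, \beta$ depending linearly on $A \otimes C$.

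To determine $\alpha$ and $\beta$, I would apply the two functionals $\text{tr}(\cdot)$ and $\text{tr}(F\,\cdot)$ to both sides. Both are preserved by the twirl (trace is unitarily invariant, and $F$ commutes with $W \otimes W$). Combining $\text{tr}(I) = 4^n$, $\text{tr}(F) = 2^n$, $F^2 = I$, $\text{tr}(A \otimes C) = \text{tr}(A)\text{tr}(C)$, and $\text{tr}[F(A \otimes C)] = \text{tr}(AC)$ yields a $2 \times 2$ linear system with determinant $4^n(4^n-1)$, which I would solve for $\alpha$ and $\beta$ in closed form.

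Finally, I would pair $T(A \otimes C)$ against $B \otimes D$, using $\text{tr}[(B \otimes D)I] = \text{tr}(B)\text{tr}(D)$ and $\text{tr}[(B \otimes D)F] = \text{tr}(BD)$. Substituting $\alpha, \beta$ and collecting terms by the two denominators $4^n - 1$ and $2^n(4^n - 1)$ reproduces the stated identity. The main obstacle is simply arithmetic bookkeeping --- in particular, not accidentally swapping the roles of $\text{tr}(I) = 4^n$ and $\text{tr}(F) = 2^n$ when setting up the linear system --- but no deeper machinery beyond Schur--Weyl duality is needed.
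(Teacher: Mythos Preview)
Your proposal is correct and follows the standard Schur--Weyl/Weingarten route for second-moment Haar integrals; the arithmetic you outline indeed reproduces the stated formula with $d=2^n$. Note, however, that the paper does not actually prove this lemma: it is quoted verbatim as Lemma~3 of \cite{Cerezo2021} and used as a black box in the proof of Theorem~\ref{th:var}. So there is no ``paper's own proof'' to compare against---your argument simply supplies what the paper outsources to a citation, and it is exactly the derivation one would expect to find in the cited reference.
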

        We shall derive a simple corollary which will be useful in the proof.
        \begin{corollary}
            Let $\mathcal{W}$ be a unitary $2$-design of operators acting on $ \mathbb{C} ^ {2 ^ n}$ and let $A,B,C,D \in \mathcal{L}(\mathbb{C} ^ {2 ^ n})$ be pure states. Then we have 

            \begin{align}
                \mathbb{E}_{W \sim \mathcal{W}} \text{tr} \left[ WA W^{\dag} B \right] \text{tr} \left[ WC W^{\dag} D \right] \leq \frac{4}{4 ^ n} 
            \end{align}
        \end{corollary}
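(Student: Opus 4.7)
The plan is to obtain the bound as a direct specialization of the preceding Lemma to the case where $A,B,C,D$ are pure states. First I would substitute into the Lemma's identity the observations that $\text{tr}(A)=\text{tr}(B)=\text{tr}(C)=\text{tr}(D)=1$ for pure states. After this substitution the right-hand side collapses to
\begin{equation*}
\frac{1}{4^n-1}\bigl[1+\text{tr}(AC)\,\text{tr}(BD)\bigr]-\frac{1}{2^n(4^n-1)}\bigl[\text{tr}(AC)+\text{tr}(BD)\bigr],
\end{equation*}
so everything reduces to controlling the two overlap quantities $\text{tr}(AC)$ and $\text{tr}(BD)$.

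Next I would note that for pure states $A=\ket{a}\bra{a}$ and $C=\ket{c}\bra{c}$ we have $\text{tr}(AC)=|\braket{a|c}|^2\in[0,1]$, and similarly for $\text{tr}(BD)$. Because both overlaps are non-negative, the second (negative) term in the identity is non-negative and can be dropped when deriving an upper bound. For the first term, I simply bound each overlap by $1$, yielding
\begin{equation*}
\mathbb{E}_{W\sim\mathcal{W}}\text{tr}[WAW^{\dag}B]\,\text{tr}[WCW^{\dag}D]\;\leq\;\frac{2}{4^n-1}.
\end{equation*}

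Finally I would verify the elementary inequality $\tfrac{2}{4^n-1}\leq \tfrac{4}{4^n}$, which is equivalent to $4^n\geq 2$ and thus holds for all $n\geq 1$, completing the proof. No step here looks like a genuine obstacle: the main ``work'' is just recognizing which terms are non-negative so that the dropped contribution really does give an upper bound, and then plugging in the trivial overlap bounds; the rest is arithmetic. If one wanted a tighter constant one could keep the negative term, but since the target constant $4/4^n$ is already what is needed downstream, the loose version above suffices.
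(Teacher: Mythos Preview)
Your proof is correct and follows essentially the same route as the paper: drop the negative term in the Lemma's identity (since $\text{tr}(AC),\text{tr}(BD)\in[0,1]$ for pure states), bound the remaining bracket by $2$, and then use $2/(4^n-1)\le 4/4^n$. The only cosmetic difference is that you substitute the unit traces first before bounding, while the paper bounds each of the four terms directly.
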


        \begin{proof}
            The third and fourth terms are non-negative and can be dropped since it involves only traces of states and fidelity between states. The first two terms are upper bounded by $1$ due to the same reason as well. Finally, consider the fact that $1/(4^n-1) \leq 2/4^n$  for any $n \geq 1$.
        \end{proof}
        
        Plugging this in (17) gives us

        \begin{align*}
            \text{Var}_{\rho \sim \mathcal{D}_2} \left( \| O\|_{\rho, \mathcal{U}_d}^2\right) \leq &\frac{4}{4^n}\mathbb{E}_{U_1 \sim \mathcal{U}_d} \mathbb{E}_{U_2 \sim \mathcal{U}_d} \\
            &\sum \limits_{u_1, u_2=0}^{2^n-1} \langle O\rangle_{\hat{\rho}_{U_1, u_1}} ^ 2 \langle O\rangle_{\hat{\rho}_{U_2, u_2}} ^ 2 \numberthis
        \end{align*}
        Since a state $2$-design is also a state $1$-design, we have 
        \begin{align}
            &\mathbb{E}_{\rho \sim \mathcal{D}_2} \mathbb{E}_{U \sim \mathcal{U}_d} \sum \limits_{u=0}^{2^n-1} \bra{u} U \rho U^{\dag} \ket{u} \langle O\rangle_{\hat{\rho}_{U,u}} ^ 2 \\
            &= \frac{1}{2^n}\mathbb{E}_{U \sim \mathcal{U}_d} \sum \limits_{u=0}^{2^n-1}  \langle O\rangle_{\hat{\rho}_{U,u}} ^ 2 \\
            &= \| O\|_{\mathds1/2^n, \mathcal{U}_d}^2 \leq 4 \| O\|_F^2.
        \end{align}

        This completes the proof.
    \end{proof}

        \begin{customthm}{6}
            Let $d = \Theta(\log n)$ and $\rho$ be an $n$-qubit pure state sampled from a state 2-design $\mathcal{D}_2$. For any $\delta, \epsilon \in (0,1)$, $m > 1/\sqrt{\delta}$, and any $C>0$, let
            \begin{align}
            T_1 \geq 2 \log \left(\frac{2(m^2-1)C}{m^2\delta-1}\right),\ T_2 \geq \frac{136}{\epsilon^2} (2m + 1)\| O\|_F^2.
            \end{align} 
            Then for any parameter vectors $\boldsymbol{\theta} ^ {(1)}, \boldsymbol{\theta} ^ {(2)}, \dots, \boldsymbol{\theta} ^ {(C)}$, all values $\langle W_O(\boldsymbol{\theta} ^ {(c)}) \rangle_{\rho}$, $1\leq c\leq C$, defined as in Eq~\eqref{eq:costfun} can be estimated using $ \langle \widehat{W_O}(\boldsymbol{\theta} ^ {(c)}) \rangle_{\rho}$ defined as in Eq~\eqref{eq:tfun} so that with probability at least $1-\delta$, we have $| \langle W_O(\boldsymbol{\theta} ^ {(c)})\rangle_{\rho} - \langle \widehat{W_O}(\boldsymbol{\theta} ^ {(c)}) \rangle_{\rho} | \leq \epsilon$
            for all $c$. 
        \end{customthm}

    \begin{proof}
        Using Chebychev's Inequality, when we sample a state $\rho$ from a $ 2$ design, we have

        \begin{align*}
            &\text{Prob} \Big[ \left| \| \widetilde{O}\|_{\rho,\mathcal{U}_d}^2 - \| \widetilde{O}\|_{\mathds1/2^n,\mathcal{U}_d} ^ 2 \right| \\ 
            &\hspace*{1cm} \leq m\sqrt{\text{Var}_{\rho \sim \mathcal{D}_2} \| \widetilde{O}\|_{\rho,\mathcal{U}_d}^2 }\Big] \geq 1 - \frac{1}{m^2} \numberthis \\
            \implies &\text{Prob} \left[  \| \widetilde{O}\|_{\rho,\mathcal{U}_d}^2 - 4\| O\|_F ^ 2  \leq 8 \| \widetilde{O}\|_F^2 m\right] \geq 1 - \frac{1}{m^2} \numberthis \\
            \implies &\text{Prob} \left[  \| \widetilde{O}\|_{\rho,\mathcal{U}_d}^2 \leq 4(2m + 1)\| \widetilde{O}\|_F ^ 2 \right] \geq 1 - \frac{1}{m^2} \numberthis
        \end{align*}
So with probability at least $1-1/m^2$, the state dependent shadow norm is bounded by $4(2m + 1)\| \widetilde{O}\|_F^2$. So, for any $\delta', \epsilon \in (0,1)$, if we use $T_1 T_2$ shallow shadows, where $T_1=2\log(2C/\delta')$, $T_2 = (136(2m + 1)/\epsilon^2) \| \widetilde{O_i}\|_F^2$, with probability $(1-\delta)(1-1/m^2)$, for all $c$, we will have $ | \langle W_O(\boldsymbol{\theta} ^ {(c)}) \rangle_{\rho} - \langle \widehat{W_O}(\boldsymbol{\theta} ^ {(c)}) \rangle_{\rho}| \leq \epsilon$.

        Set $1 - \delta = (1 - \delta')(1-1/m^2)$. So we have

        \begin{align}
            &1 - \delta = (1 - \delta')(1-1/m^2) \\
            \implies &\delta' = 1 - \left(\frac{1-\delta}{1-1/m^2} \right) \\
            &\ \ \ = 1 + \frac{m^2(\delta - 1)}{m^2 - 1} \\
            &\ \ \ = \frac{m^2\delta - 1}{m^2-1}
        \end{align}
        This completes the proof.
    \end{proof}

\end{document}